\documentclass[11pt]{article}

\newcommand{\violet}[1]{#1}

\renewcommand{\epsilon}{\varepsilon} 

\input{preamble}

\newif\iffull
\fulltrue


\title{Non-Malleable Codes for Small-Depth Circuits}

\author{
	Marshall Ball\thanks{\texttt{marshall@cs.columbia.edu}, Columbia University.  Supported in part by the Defense Advanced Research Project Agency (DARPA) and Army Research Office (ARO) under Contract W911NF-15-C-0236, NSF grants CNS1445424 and CCF-1423306, ISF grant no. 1790/13, the Leona M. \& Harry B. Helmsley Charitable Trust, and the Check Point Institute for Information Security. Part of this research was done while visiting the FACT Center at IDC Herzliya. Any opinions, findings and conclusions or recommendations expressed are those of the authors and do not necessarily reflect the views of the Defense Advanced Research Projects Agency, Army Research Office, the National Science Foundation, or the U.S. Government.}
	\and Dana Dachman-Soled \thanks{\texttt{danadach@ece.umd.edu}, University of Maryland.  Supported in part by an NSF CAREER Award \#CNS-1453045, by a research partnership award from Cisco and by financial assistance award 70NANB15H328 from the U.S. Department of Commerce, National Institute of Standards and Technology.}
	\and
	Siyao Guo\thanks{\texttt{s.guo@neu.edu}, Northeastern University.  Supported by NSF grants CNS1314722 and CNS-1413964}
	\and Tal Malkin\thanks{\texttt{tal@cs.columbia.edu}, Columbia University.  Supported in part by the Defense Advanced Research Project Agency (DARPA) and Army Research Office (ARO) under Contract W911NF-15-C-0236, NSF grants CNS1445424 and CCF-1423306, and the Leona M. \& Harry B. Helmsley Charitable Trust.}
	\and Li-Yang Tan\thanks{\texttt{liyang@cs.columbia.edu}, Toyota Technological Institute.  Supported by NSF grant CCF 1563122.  }
}

\begin{document}
	\maketitle

\pagenumbering{gobble}
        
\begin{abstract}

We construct efficient, unconditional non-malleable codes that are secure against tampering functions computed by small-depth circuits.  For constant-depth circuits of polynomial size (i.e.~$\mathsf{AC^0}$ tampering functions), our codes have codeword length $n = k^{1+o(1)}$ for a $k$-bit message.  This is an exponential improvement of the previous best construction due to Chattopadhyay and Li (STOC 2017), which had codeword length $2^{O(\sqrt{k})}$.  Our construction remains efficient for circuit depths as large as $\Theta(\log(n)/\log\log(n))$ (indeed, our codeword length remains $n\leq k^{1+\epsilon})$, and extending our result beyond this would require separating $\mathsf{P}$ from $\mathsf{NC^1}$.

\violet{We obtain our codes via a new efficient non-malleable reduction from small-depth tampering to split-state tampering. A  novel aspect of our work is the incorporation of techniques from unconditional derandomization into the framework of non-malleable reductions.  In particular, a key ingredient in our analysis is a recent pseudorandom switching lemma of Trevisan and Xue (CCC 2013), a derandomization of the influential switching lemma from circuit complexity; the randomness-efficiency of this switching lemma translates into the rate-efficiency of our codes via our non-malleable reduction.}  
\end{abstract}

\ifsubmission
\newpage
\pagenumbering{arabic}
\fi
%
	
	
	\section{Introduction}

Non-malleable codes were introduced in the seminal work of
Dziembowski, Pietrzak, and Wichs \violet{as a natural generalization of error correcting codes~\cite{DPW10,DPW18}.} 
Non-malleability against a class $T$ is defined via the following ``tampering'' experiment:


Let $t \in T$ denote an ``adversarial channel,'' i.e.~the channel modifies the transmitted bits
via the application of $t$.
\begin{enumerate}
	\item Encode message $m$ using a (public)
	randomized encoding algorithm: $c \leftarrow \E(m)$,  
	\item Tamper the codeword: $\tilde{c}=t(c)$,
	\item Decode the tampered codeword (with public decoder): $\tilde{m}=\D(\tilde{c})$.
\end{enumerate}

Roughly, the encoding scheme, $(\E,\D)$, is non-malleable against a class $T$, if for any $t\in T$ the result of the above experiment, $\tilde{m}$, is either identical to the original message, or completely unrelated. More precisely, the outcome of a $t$-tampering experiment should be simulatable without knowledge of the message $m$ (using a special flag ``same'' to capture the case of unchanged message). 

In contrast to error correcting codes, the original message 
$m$ is only guaranteed to be recovered if no tampering occurs. On the
other hand, non-malleability can be achieved against a much wider
variety of adversarial channels than those that support error
detection/correction. As an example, a channel implementing a constant
function (overwriting the codeword with some fixed codeword) is
impossible to error correct (or even detect) over, but is
non-malleable with respect to any 
encoding scheme.

Any construction of non-malleable codes must make {\em some\/} restriction on
the adversarial channel, or else the channel that decodes, modifies
the message to a related one, and re-encodes, will break the
non-malleability requirement.  
Using the probabilistic method,  non-malleable codes
have been shown to exist against any class of functions that is not
too large ($|T|\leq 2^{2^{\alpha n}}$ for
$\alpha<1$)~\cite{DPW10,CG16}.
(Here, and throughout the paper, we use $k$ to denote the length of
the message, and $n$ to denote the length of the codeword.) 
A large body of work has been dedicated to the
{\em explicit\/} construction of codes for a variety of tampering classes: for example, functions
that tamper each half (or smaller portions) of the codeword arbitrarily
but independently~\cite{DKO13,CG16,CZ14,ADL14,Aggarwal15,Li17,Li18},
and tampering by flipping bits and permuting the result~\cite{CRYPTO:AGMPP15}. 

\violet{In this paper, we extend a recent line of work that focuses on explicit constructions of non-malleable codes that are  secure against  adversaries whose computational strength correspond to well-studied complexity-theoretic classes.  Since non-malleable codes for a tampering class $T$ yields lower bounds against $T$ (see Remark~\ref{discussion}), a broad goal in this line of work is to construct efficient non-malleable codes whose security (in terms of computational strength of the adversary) matches the current state of the art in computational lower bounds.}\footnote{\violet{In this paper we focus on constructing explicit, unconditional codes; see Section~\ref{sec:related-work} for a discussion on a different line of work on \emph{conditional} constructions
in various models: access to common reference strings, random oracles, or 
under cryptographic/computational assumptions.}}

\paragraph{Prior work on complexity-theoretic tampering classes.} In~\cite{BDKM16}, Ball et al.~constructed efficient non-malleable
codes against  
the class of $\ell$-local functions, where each output bit is a
function of $\ell$ input bits, and $\ell$ can be as large as
$\Omega(n^{1-\epsilon})$ for constant $\epsilon>0$.\footnote{They
give
constructions even for $o(n/\log n)$-local tampering, but the code rate
is inversely proportional to locality, so the codes become inefficient
for this locality.}
This class can be thought of as $\textsf{NC}$ (circuits of fan-in 2)
of almost logarithmic depth, $<(1-\epsilon)\log n$, and in particular,
contains $\mathsf{NC}^0$. In~\cite{CL17}, Chattopadhyay and Li, using new constructions of non-malleable extractors, gave explicit
constructions of non-malleable codes against $\ACZ$ \violet{and affine tampering functions.  These are the first constructions of information-theoretic non-malleable codes in the
standard model where each tampered bit may depend on {\em all\/} the
input bits. However, their construction for $\ACZ$ circuits} has \violet{exponentially small} rate $\Omega(k/2^{\sqrt{k}})$ (\violet{equivalently, codeword length $2^{O(\sqrt{k})}$ for a $k$-bit message}), yielding an encoding
  procedure that is not efficient. 

\subsection{This work: Efficient non-malleable codes for small-depth circuits}

In this work, we address the main open problem from~\cite{CL17}:
we give the first explicit construction of non-malleable codes for
small-depth circuits achieving 
polynomial rate:  
\begin{theorem}[\violet{Non-malleable codes for small-depth circuits; informal version}]
\label{thm:main} 
	For any $\delta\in(0,1)$, there is a constant $c\in(0,1)$ such that there is an explicit and efficient non-malleable code that is unconditionally secure against polynomial-size unbounded fan-in circuits of depth $c \log(n)/\log\log(n)$ with \violet{codeword length $n = k^{1+\delta}$} for a $k$-bit message and negligible error.
\end{theorem}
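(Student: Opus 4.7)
The plan is to combine two components via the non-malleable reduction framework: given a non-malleable code for a class $T'$ together with a non-malleable reduction from $T$ to $T'$, one obtains a non-malleable code for $T$. First, I would invoke an off-the-shelf split-state non-malleable code $(\E_{\text{ss}},\D_{\text{ss}})$ with good rate, encoding a $k$-bit message into a pair $(L,R)$ of total length $\poly(k)$ that is secure against adversaries who tamper $L$ and $R$ independently. Second, I would build a rate-efficient, explicit non-malleable reduction from depth-$d$ circuit tampering to split-state tampering. Composing the two gives the final code: encode $m$ into $(L,R)$, then apply the reduction's encoding to embed $(L,R)$ into a codeword of length $n$.

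The heart of the proof is the reduction. I would embed $L$ and $R$ at secretly chosen, disjoint sets of positions inside the codeword, and fill the remaining ``noise'' positions with the output of a pseudorandom generator that $\epsilon$-fools $\AC^0$ circuits of depth $d+O(1)$, using a short seed drawn freshly at encoding time. The point is that for any depth-$d$ tampering function $t$, conditioning on the noise positions effectively imposes a (pseudo)random restriction on $t$. By iterating the Trevisan--Xue pseudorandom switching lemma $d-1$ times, the restricted function collapses, at each output bit, into a decision tree of depth only $O(\log n)$. Each output bit then depends on only a handful of input positions, so up to a small ``boundary'' where an $L$-output occasionally queries an $R$-position (or vice versa), the induced tampering on $(L,R)$ is split-state. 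The small cross-dependence is absorbed by starting from a base split-state code that is additionally secure against bounded leakage from each side, so the at-most-$O(\log n)$ boundary queries can be simulated as leakage.

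The main obstacle is parameter balancing. Each application of the switching lemma costs both depth and restriction probability, and denser noise means a longer codeword. The key quantitative point is that the Trevisan--Xue seed has length only $\poly\log(n)$, so the total encoding length stays $n = k \cdot \poly\log(k) = k^{1+o(1)}$; this is precisely what translates the \emph{randomness-efficiency} of the pseudorandom switching lemma into the \emph{rate-efficiency} of the code, and is why we do exponentially better than the $2^{O(\sqrt{k})}$ length of Chattopadhyay--Li. To reach depth $d = c\log n/\log\log n$, one must calibrate the restriction probability $p$ and the number of iterations so that after $d-1$ rounds the surviving function has decision-tree depth $O(\log n)$ except with negligible probability, which forces $c$ to be taken small as a function of $\delta$.

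Once the reduction is proved to simulate any depth-$d$ tampering of the full codeword by some split-state tampering of $(L,R)$ with negligible statistical error, Theorem~\ref{thm:main} follows by the composition lemma for non-malleable reductions: the negligible reduction error and the negligible error of $(\E_{\text{ss}},\D_{\text{ss}})$ add, and the two rates multiply, yielding codeword length $n \leq k^{1+\delta}$ for any target $\delta \in (0,1)$. I expect the cleanest way to organize the argument is to first prove the reduction against \emph{truly} random noise (a much simpler switching-lemma calculation), then swap in the Trevisan--Xue pseudorandom restriction and re-run the analysis inside a hybrid argument that pays only the PRG's fooling error, which is negligible by construction.
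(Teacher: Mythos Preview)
Your high-level architecture is right: compose a split-state code with a non-malleable reduction from small-depth tampering to split-state tampering, and use the Trevisan--Xue pseudorandom switching lemma so the seed is short enough that the rate stays near $1$. You also correctly identify that cross-dependence after the collapse must be handled via some form of leakage resilience.

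However, there is a genuine gap in your reduction: you never explain how the decoder learns where $L$ and $R$ were embedded. You write that the positions are ``secretly chosen'' and the seed is ``drawn freshly at encoding time,'' but if the seed is not placed in the codeword the decoder cannot recover it, and if the positions are fixed and public then the restriction is no longer (pseudo)random from the adversary's point of view and the switching lemma does not apply. This is not a detail that can be patched locally: it is precisely the obstacle the paper isolates and solves with the notion of \emph{pseudorandom restrictions with extractable seeds}. Concretely, the paper encodes the short seed $\zeta$ via an RPE and places $\E_R(\zeta)$ in a fixed prefix of the codeword, then proves (Claim~\ref{cl:error}) that the joint distribution $G(\zeta)\|\E_R(\zeta)\|U$ is still $\sigma$-wise independent, so the Trevisan--Xue lemma still fires. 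But now the \emph{tampered} seed depends on message bits sitting in the live positions, so $\D^\star\circ f\circ\E^\star$ is not a fixed local function of the message; the paper absorbs this by introducing the intermediate class $\leaky^{d,m,n}[\Local^\ell]$ and proving a separate reduction $\leaky^{d,m,n}[\Local^\ell]\Rightarrow\textsc{SS}_k$ (Lemma~\ref{lem:bdkm}). Your proposal collapses both steps into one and hand-waves the seed-recovery and the message-dependence of the tampered seed; as written, the scheme is not even correct (condition (a) of Definition~\ref{def:red} fails).

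A smaller point: your final paragraph proposes to first analyze truly random noise and then ``swap in'' a PRG via a generic hybrid paying only the fooling error. This does not work directly: the event ``the restricted circuit has decision-tree depth $\ge t$'' is a \emph{structural} property of the restriction, not the value of a single $\ACZ$ circuit on a single input, so a black-box $\ACZ$-fooling PRG does not immediately control it. Trevisan--Xue's contribution is exactly to show this bad event is itself captured by a CNF of bounded size, after which polylog-wise independence (Theorem~\ref{thm:bazzi}) suffices; you need that specific argument, not a generic hybrid.
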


\dnote{Is the above in contrast to the result of \cite{CL17} (i.e. they do not get super-constant depth). If yes, we should mention this somewhere...}
\mbnote{they only state constant depth...}

\violet{Extending Theorem~\ref{thm:main} to circuits of depth $\omega(\log(n)/\log\log(n))$ would require separating $\mathsf{P}$ from $\mathsf{NC^1}$; see Remark~\ref{discussion}. Therefore, in this respect the parameters that we achieve in Theorem~\ref{thm:main} bring the security of our codes (in terms of computational strength of the adversary) into alignment with the current state of the art in circuit lower bounds.}\footnote{\violet{Although~\cite{CL17} state their results in terms of $\ACZ$ circuits, an inspection of their proof shows that their construction also extends to handle circuits of depth as large as $\Theta(\log(n)/\log\log(n))$.  However, for such circuits their codeword length becomes $2^{O(k/\log(k))}$.}}

\violet{ For the special case of $\ACZ$ circuits, our techniques lead to a non-malleable code with sub-polynomial rate (indeed, we achieve this for all depths $o(\log(n)/\log\log(n))$): }

\begin{theorem}[\violet{Non-malleable codes for $\ACZ$ circuits; informal version}]
	There is an explicit and efficient non-malleable code that is unconditionally secure against $\ACZ$ circuits with \violet{codeword length $n = k^{1+o(1)}$} for a $k$-bit message and negligible error.
\end{theorem}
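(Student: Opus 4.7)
The plan is to build an efficient non-malleable reduction from $\AC^0$ tampering on an $n$-bit codeword to split-state tampering on a pair of shorter strings, and then compose this reduction with a known high-rate non-malleable code in the split-state model (for example, Aggarwal--Dodis--Lovett or subsequent improvements by Li). Since split-state codes already achieve rate $\Theta(1)$, the final codeword length will be $n = k^{1+o(1)}$ provided the reduction's encoding blow-up is $k^{o(1)}$, which is precisely what the randomness-efficiency of the Trevisan--Xue pseudorandom switching lemma will buy.

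The encoding $\embed$ would take a split-state codeword $(L,R)$ of total length ${\approx}\,k$ and embed it into an $n$-bit string: $L$ and $R$ are placed in two structured sets of ``free'' positions, while the remaining ``chaff'' positions are filled according to a pseudorandom restriction drawn from the Trevisan--Xue generator with seed length $\poly\log(n) = k^{o(1)}$. The key feature of this generator is that it stretches such a short seed into a restriction which, when applied to any polynomial-size $\AC^0$ circuit, collapses every output bit to a decision tree of depth $d = O(\log n)$ with overwhelming probability. The decoder $\extract$ simply reads the free positions and invokes $\D_{ss}$.

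To prove non-malleability, I would take any tampering function $t\in\AC^0$ and analyze $t\circ\embed$. Iterating the pseudorandom switching lemma (in the style of H{\aa}stad's depth reduction) shows that, with probability $1-\negl(n)$ over the seed, each output bit of $t\circ\embed$ is computed by a depth-$d$ decision tree over the free positions. I would then use a pseudorandom partition of the free positions into ``left'' and ``right'' sub-blocks (with sufficient entropy to support a union bound) so that, with overwhelming probability, no depth-$d$ tree for an output bit destined for the left output block queries any right input position, and vice versa. Once this independence-of-sides property is established, the induced tampering on $(L,R)$ is statistically close to a split-state tampering, and the simulator guaranteed by $(\E_{ss},\D_{ss})$ transfers to $(\embed,\extract)$ via the standard composition theorem for non-malleable reductions (preserving the ``same'' flag).

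The main obstacle is exactly the independence-of-sides step: ensuring that the shallow decision trees produced by the switching lemma cannot ``straddle'' the two halves of the encoding. A depth-$O(\log n)$ tree queries only $O(\log n)$ of the $k^{\Omega(1)}$ free positions, so a union bound over output bits and over the $2^{O(\log n)}$ possible query patterns should suffice, but only if the left/right partition has entropy $\omega(\log n)$ per output bit, and only if the generator simplifies $t$ down to depth $O(\log n)$ using a $k^{o(1)}$-bit seed. It is the specific parameters of the Trevisan--Xue generator---as opposed to, say, a generic pseudorandom restriction or a Nisan-style PRG against $\AC^0$---that let these constraints be met simultaneously, which is what pushes the rate to $k^{1+o(1)}$ rather than merely polynomial.
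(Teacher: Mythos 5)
Your high-level plan (pseudorandom restriction with a short Trevisan--Xue seed, embed a split-state codeword in the surviving positions, compose with a known split-state code) matches the paper's starting point, but the proposal skips over the two places where the actual work happens, and the steps you do sketch would not go through as written. First, you say the decoder ``simply reads the free positions,'' but the decoder is deterministic and shares no randomness with the encoder, so it cannot know which positions are free unless the seed is recoverable from the (tampered) codeword. Storing the seed in the codeword is exactly the move that threatens to break the switching lemma (the restriction is no longer independent of the values filling the killed positions) and, worse, the tampered copy of the seed now depends on the message through the local functions acting on survivor positions. The paper resolves this by encoding the seed with a reconstructable probabilistic encoding whose secrecy keeps the joint distribution $\sigma$-wise independent (so Trevisan--Xue still applies), and by passing through an intermediate \emph{leaky local} tampering class in which the adversary receives $m = \poly\log(n)$ bits of adaptive leakage per depth level to determine the tampered seed before committing to the local tampering. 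Nothing in your proposal plays this role, and without it property (a) of a non-malleable reduction ($\Pr[\D(\E(x))=x]=1$) already fails.

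Second, your independence-of-sides step rests on a miscount. A depth-$t$ decision tree queries only $t$ positions \emph{on a given input}, but as a function it can depend on up to $2^t$ variables; for $t = O(\log n)$ that is $\poly(n)$ positions, so a union bound over output bits cannot show that left-output trees avoid all right-input positions. The paper does not establish such a non-straddling property at all. Instead it tolerates straddling: the right state is made longer than the left by a factor of $\ell$ (the locality), the left state is embedded pseudorandomly into a block longer still by another factor of $\ell$, and the secrecy-of-partial-views property of the RPEs lets the simulator sample all cross-side bits uniformly and independently of the inputs (this is the $\Omega(1/\ell^2)$ rate loss in Lemma~\ref{lem:bdkm}, absorbed by taking $\ell = n^{1/\log\log\log n}$). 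A smaller point: two-split-state codes did not have rate $\Theta(1)$ at the relevant time --- the paper uses Li's $\Omega(\log\log n/\log n)$-rate code --- though this blowup is $k^{o(1)}$ and does not affect the final codeword length.
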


\mbnote{should we also mention the max size we can handle?}

\violet{Prior to our work, there were no known constructions of polynomial-rate non-malleable codes even for depth-$2$ circuits (i.e.~polynomial-size DNF and CNF formulas).} 

\violet{We describe our proof and the new ideas underlying it in Section~\ref{sec:techniques}.}  At a high level, we proceed by designing a new efficient \emph{non-malleable
  reduction} from small-depth tampering to split-state tampering.
Our main theorem thus follows by combining this non-malleable reduction with the
best known construction of split-state non-malleable codes~\cite{Li18}.

\violet{The flurry of work on non-malleable codes has yielded many surprising connections to other areas of theoretical
computer science,  including} additive combinatorics~\cite{ADKO15},
two-source extractors~\cite{Li12,Li13,CZ16}, and non-malleable
encryption/commitment~\cite{CMTV15,CDTV16,GPR16}. \violet{As we discuss in Section~\ref{sec:techniques}, our work 
establishes yet another connection}---to techniques in unconditional
derandomization.    \violet{While we focus exclusively on small-depth adversaries in this work, we are optimistic that the techniques we develop will lead to further work on non-malleable codes against other complexity-theoretic tampering classes (see Remark~\ref{rem:future} for a discussion on the possible applicability of our techniques to other classes).}

\begin{remark}[On the efficiency of non-malleable codes]
 A few previous works on  non-malleable codes use a non-standard definition of
efficiency, only requiring encoding/decoding to take time that is
polynomial in the length of the codeword (namely, the output of the
encoding algorithm), thus allowing a codeword and computational
complexity that is  super-polynomial in the message length.
In contrast, we use the standard definition of efficiency---running
time that is polynomial in the length of the input. 
While the non-standard definition is appropriate in some settings, we
argue that the standard definition is the right one in the context of
non-malleable codes.
Indeed, many error-correcting codes in the literature fall under the
category of  \emph{block codes}---codes that act on a block of $k$
bits of input data to produce $n$ bits of output data, where $n$ is
known as the block size.
To encode messages $m$ with length greater than $k$, $m$ is split into
blocks of length $k$ and the error-correcting code is applied to each
block at a time, yielding a code of rate $k/n$. For block codes, the
block size $n$ can be fixed first and then $k$ can be set as a
function of $n$.  A non-malleable code, however, cannot be a block
code: If $m$ is encoded block-by-block, the tampering function can
simply ``destroy'' some blocks 
while leaving the other blocks untouched, thus breaking non-malleability.
Instead, non-malleable codes take the entire message $m$ as input and
encodes it in a single shot. So in the non-malleable codes setting, we
must assume that $k$ is fixed first and that $n$ is set as a function
of $k$. Thus, in order to obtain efficient codes,
the parameters of the code must be polynomial in terms of $k$.
\end{remark}
\tmnote{feel free to modify the above remark if you think it sounds
  too defensive or makes too big of a deal... also I didn't make the
  differentiation (not sure it's needed) between defining this inefficient efficiency,
    and actually having your construction be inefficient like this (CL
    have both efficient and inefficient constructions, but the ac0 is
    not efficient; I think AGM have only efficient constructions but
    the rate compiler can apply also to inefficient ones I guess? )}

\begin{remark}[On the limits of extending our result]\label{discussion}
Because any function in $\textsf{NC}^1$ can be
  computed by a polynomial-size unbounded fan-in circuit of depth
  $O(\log(n)/\log\log(n))$ (see e.g.~\cite{KPPY84,Val83}), any non-trivial
  non-malleable code for larger depth circuits would yield a
  separation of $\textsf{NC}^1$ from $\textsf{P}$. Here, we take
  non-trivial to mean that error is bounded away from 1 and
  encoding/decoding run in time polynomial in the \emph{codeword}
  length (namely, even an inefficient code, as per the discussion
  above, can be non-trivial).
  This follows from the fact (noted in many previous works) that any
  explicit, non-trivial 
  code is vulnerable to the simple $\textsf{P}$-tampering attack:
  decode, flip a bit, re-encode.\mbnote{the code need not even be
    explicit in some sense.}
  Hence, in this respect Theorem 1 is the limit of what we can hope to establish given the current state of the art in circuit and complexity theory.
  \end{remark}

	\subsection{Our Techniques}
\label{sec:techniques} 

\lynote{Let's emphasize and discuss the fact that our approach is completely different from CL's.  We don't want people to think "Oh CL used a full randomness switching lemma, these guys just swapped it out for a pseudorandom switching lemma within the CL proof to get the improvement."  Maybe cite a bunch of papers (plenty by Eshan, Xin Li and friends) that use the Cheraghchi-Guruswami framework connecting NMCs to NMEs, and say that we depart from this approach.}

\lynote{In unconditional derandomization the point of pseudorandom SLs is that a pseudorandom restriction can be generated with very short seed length.  It is kinda cool that in our work the pseudorandom SL is coming through for us for a fundamentally different reason: it shows that we get an AC0 to local collapse for a very broad class of distributions of restrictions (namely, any distribution that is polylog-wise independent), and we show that our very carefully designed distribution of restrictions falls into this broad class.}

\lynote{When discussing the TX pseudorandom switching lemma, let's throw in a bunch of cites to previous pseudorandom switching lemmas -- we can use the ones at the bottom of page 8 of the attached manuscript.  (One of them is by Parikshit, who is on the committee...)}

At a high level, we use the {\em non-malleable reduction} framework introduced by Aggarwal et al.~\cite{ADKO15}. Loosely speaking, an encoding scheme $(\E,\D)$ non-malleably reduces a ``complex'' tampering class, $\mathcal{F}$, to a ``simpler'' tampering class, $\mathcal{G}$, if the tampering experiment (encode, tamper, decode) behaves like the ``simple'' tampering (for any $f\in\mathcal{F}$, $\D(f(\E(\cdot)))\approx G_f$, a distribution over $\mathcal{G}$).
\cite{ADKO15} showed that a non-malleable code for the simpler $\mathcal{G}$, when concatenated with an (inner) non-malleable reduction $(\E,\D)$ from $\mathcal{F}$ to $\mathcal{G}$, yields a non-malleable code for the more ``complex''~$\mathcal{F}$. \violet{(See Remark~\ref{rem:CL} for a comparison of our approach to that of~\cite{CL17}.)} 

Our main technical lemma is a new non-malleable reduction from small-depth tampering to \emph{split-state} tampering, where left and
right halves of a codeword may be tampered arbitrarily, but
independently.  We achieve this reduction \violet{in two main conceptual steps.  We first design a non-malleable reduction from small-depth tampering to a variant of local tampering that we call leaky local, where the choice of local tampering may depend on leakage from the codeword.  This step involves a careful design of pseudorandom restrictions with extractable seeds, which we use in conjunction with the pseudorandom switching lemma of Trevisan and Xue~\cite{TX13} to show that small-depth circuits ``collapse" to local functions under such restrictions.  In the second (and more straightforward) step, we reduce leaky-local tampering to split-state tampering using techniques from~\cite{BDKM16}.  We now describe both steps in more detail.}


\paragraph{Small-Depth Circuits to Leaky Local Functions.}
\violet{To highlight some of the new ideas underlying our non-malleable reduction, we} first consider the simpler case of reducing $w$-DNFs (each clause contains at most $w$ literals) to the family of leaky local functions. The reduction for general small-depth circuits will follow from a recursive composition of this reduction.

A non-malleable reduction  $(\E,\D)$ reducing DNF-tampering to (leaky) local-tampering needs to satisfy two conditions (i) $\Pr[\D(\E(x))=x]=1$ for any $x$ and, (ii) $\D\circ f\circ \E$ is a distribution over (leaky) local functions for any width-$w$ DNF $f$. A classic result from circuit complexity, the switching lemma~\cite{FSS84, Ajt89, Yao85, Hastad86}, states that DNFs collapse to local functions under fully random restrictions (``killing'' input variables by independently fixing them to a random value with some probability).\footnote{The switching lemma actually shows that DNFs become \emph{small-depth decision trees} under random restrictions.  However, it is this (straightforward) consequence of the switching lemma that we will use in our reduction.} Thus a natural choice of $\E$ for satisfying (ii) is to simply sample from the generating distribution of restrictions and embed the message in the surviving variable locations (fixing the rest according to restriction). However, although $f\circ \E$ becomes local, it is not at all clear how to decode and fails even (i). To satisfy (i), a naive idea is to simply append the ``survivor'' location information to the encoding. However, this is now far from a fully random restriction (which requires among other things that the surviving variables are chosen independently of the random values used to fix the killed variables) is no longer guaranteed to ``switch'' the DNFs to Local functions with overwhelming probability.

To overcome these challenges, we employ {\em pseudorandom switching lemmas}, usually arising in the context of unconditional derandomization, to relax the stringent properties of the distribution of random restrictions needed for classical switching lemmas. In particular, we invoke a recent pseudorandom switching lemma of Trevisan and Xue~\cite{TX13}, which reduces DNFs to local functions (with parameters matching those of~\cite{Hastad86}) while only requiring that randomness specifying survivors and fixed values be $\sigma$-wise independent\footnote{\violet{Although this is not stated explicitly in~\cite{TX13}, as we show, it follows immediately by combining their main lemma with results on bounded independence fooling CNF formulas~\cite{Baz09,Raz09}}.}. This allows us to avoid problems with independence arising in the naive solution above. Now, we can append a $\sigma$-wise independent encoding of the (short) random seed that specifies the surviving variables. This gives us a generating distribution of random restrictions such that (a) DNFs are switched to Local functions, and (b) the seed can be decoded and used to extract the input locations.

At this point, we can satisfy (i) easily:  $\D$ decodes the seed (whose encoding is always in, say, the first $m$ coordinates), then uses the seed to specify the surviving variable locations and extract the original message. In addition to correctness, $f\circ \E$ becomes a distribution over local functions where the distribution only depends on $f$ (not the message). However, composing $\D$ with $f\circ \E$ induces dependence on underlying message: tampered encoding of the seed, may depend on the message in the survivor locations. The encoded seed is comparatively small and thus (assuming the restricted DNF collapses to a local function) requires a comparatively small number of bits to be leaked from the message in order to simulate the tampering of the encoded seed. Given a well simulated seed we can accurately specify the local functions that will tamper the input (the restricted DNFs whose output locations coincide with the survivors specified by the tampered seed). This is the intermediate leaky local tampering class we reduce to, which can be described via the following adversarial game: (1) the adversary commits to $N$ local functions, (2) the adversary can select $m$ of the functions to get leakage from, (3) the adversary then selects the actual tampering function to apply from the remaining local functions.

To deal with depth $d$ circuits, we recursively apply this restriction-embedding scheme $d$ times. Each recursive application allows us to trade a layer of gates for another (adaptive) round of $m$ bits of leakage in the leaky local game. One can think of the recursively composed simulator as applying the composed random restrictions to collapse the circuit to local functions and then, working inwardly, sampling all the seeds and the corresponding survivor locations until the final survivor locations can be used to specify the local tampering.

\paragraph{Leaky Local Functions to Split State.}
Ball et al.~\cite{BDKM16} gave non-malleable codes for local functions via a non-malleable reduction to split state. We make a simple modification to a construction with deterministic decoding from the appendix of the paper to show leaky local functions (the class specified by the above game) can be reduced to split state. 

Loosely, we can think of the reduction in the following manner.

First, the left and right states are given leakage-resilient properties via $\sigma_L$-wise and $\sigma_R$-wise independent encodings. These encodings have the property that any small set (here, a constant fraction of the length of the encoding) of bits will be uniformly distributed, regardless of the message inside. This will allow us, in some sense, to leak bits from the underlying encoding to (a) specify the local tampering functions, and (b) aid in subsequent stages of the reduction.

Second, we take the right encoding to be much longer than the left encoding. Because the tampering will be local, this means that the values of the bits on the right used to tamper the left encoding will be uniformly distributed, regardless of the message. This follows from the fact that there aren't too many such bits relative to the length of the right, given that there significantly fewer output bits on the left and these outputs are each dependent on relatively few bits in general.

Third, we embed the left encoding pseudorandomly in a string that is much longer than the right encoding. This means that with overwhelming probability the bits of the left encoding that affect the tampering of the right will be uniformly distributed. (The rest we can take to be uniformly distributed as well.) Note that although here we use a $\sigma$-wise independent generator, an unconditional PRG for small space, as is used in~\cite{BDKM16}, would have worked as well.

Finally, we prepend to the embedding itself, the short seed used to generate the embedding, after encoding it in a leakage resilient manner (as above). (This is in fact the only significant difference with construction in~\cite{BDKM16}.)\mbnote{I didn't highlight this initially because I thought: (a) if they want to give us more credit than we are due, great; and (b) its our paper so we don't have to worry about stepping on toes. But maybe it will seem too much like plagiarism to someone who is familiar} The presence of the seed allows us to determine the embedding locations in the absence of tampering and simulate the embedding locations in the presence of tampering without violating the leakage-resilient properties of the left and right state encodings. The leakage-resilience of the seeds encoding allows a simulator to sample the seed after leaking bits to specify a local tampering.

\violet{ 
\begin{remark}[On the possible applicability of our techniques to other tampering classes]
\label{rem:future} 
While we focus exclusively on on small-depth adversaries in this work, we remark that analogous pseudorandom switching lemmas have been developed for many other function classes in the context of unconditional derandomization: various types of formulas and branching programs~\cite{IMZ12}, low sensitivity functions~\cite{HT18}, read-once branching programs~\cite{RSV13,CHRT18} and CNF formulas~\cite{GMRTV12}, sparse $\mathbb{F}_2$ formulas~\cite{ST18}, etc.  In addition to being of fundamental interest in complexity theory, these function classes are also natural tampering classes to consider in the context of non-malleable codes, as they capture  basic types of computationally-bounded adversaries.   We are optimistic that the techniques we develop in this paper---specifically, the connection between pseudorandom switching lemmas and non-malleable reductions, and the new notion of pseudorandom restrictions with extractable seeds---will lead to constructions of efficient non-malleable codes against other tampering classes, and we leave this is an interesting avenue for future work. 
\end{remark} 

\begin{remark}[Relation to the techniques of~\cite{CL17}.]
\label{rem:CL} 
Although Chattopadhyay and Li~\cite{CL17} also use the switching lemma in their work, our overall approach is essentially orthogonal to theirs.
At a high level, \cite{CL17} uses a
framework of Cheraghchi and Guruswami~\cite{CG16} to derive non-malleable codes from
\emph{non-malleable extractors}.  In this framework,  the rate of the code is directly tied to the
error of the extractor; roughly speaking, as the parameters of the switching lemma can be at best
inverse-quasipolynomial when reducing to local functions, this unfortunately translates
(via the~\cite{CG16} framework) into codes with at best exponentially small
rate (see pg.~10 of~\cite{CL17} for a discussion of this issue).  Circumventing this limitation therefore necessitates a
significantly different approach, and indeed, as discussed above we
construct our non-malleable codes without using
extractors as an intermediary.  \violet{(On a more technical level, we remark that~\cite{CL17} uses the classic  switching lemma of H{\aa}stad~\cite{Hastad86} for fully random restrictions, whereas our work employs a recent extension of this switching lemma to pseudorandom restrictions~\cite{TX13}.)}
\end{remark} 

}

	\subsection{Related Work}
\label{sec:related-work}
Non-malleable codes were introduced by Dziembowski, Pietrzak, and
Wichs~\cite{DPW10,DPW18}.
Various subsequent works re-formulated the definition~\cite{ADKO15},
or considered extensions of the
notion~\cite{TCC:FMNV14,TCC:DLSZ15,STOC:ChaGoyLi16,ICALP:CGMPU16}.
The original work of~\cite{DPW10} presented a construction of
non-malleable codes against bit-wise tampering,   
and used the probabilistic method to prove the existence of
non-malleable codes against  tampering classes $\mathcal{F}$ of
bounded size (this result gives rise to constructions for the same
tampering classes $\mathcal{F}$ in the random oracle model).
A sequence of works starting from the work of Liu and
Lysyanskaya~\cite{LL12} presented constructions of non-malleable codes  
secure against split-state tampering.
The original work and some subsequent
works~\cite{TCC:AAMP+16,CCS:KiaLiuTse16} required an untamperable
common reference string (CRS) and/or computational assumptions. Other
works removed these 
restrictions and achieved unconditionally non-malleable codes
against split-state tampering  with no CRS~\cite{ADL14,ADKO15,Li17,Li18}. 
Among these works, the construction of Li~\cite{Li18} currently
achieves the best rate of $\Omega(\log\log n/\log n)$ for two states. 
Constructions requiring more than two split-states, and which achieve
constant rate,  were also given in~\cite{CZ14,TCC:KanObbSek17}. 

\tmnote{There was a sentence on the \cite{CRYPTO:AGMPP15} permutations
  which I removed (we already mentioned), but Marshall, maybe you
  should add a sentence the fact that \cite{TCC:AGMPP15}
  showed a rate compiler}

\paragraph{Conditional results on complexity-based tampering.}
\violet{In this paper we work within the standard model and focus on explicit, \emph{unconditional} non-malleable codes.} A variety of non-malleable codes against complexity-based tampering
classes have been constructed in other models.  These
constructions require either common randomness (CRS), access to a
public random oracle, and/or computational/cryptographic
assumptions. 

Faust et al.~\cite{EC:FMVW14} presented an
efficient non-malleable code,
in the CRS model,
against tampering function families $\mathcal{F}$ of bounded size, improving upon the original work of \cite{DPW10}.
Since the size of the CRS grows with the size of the function family,
this approach cannot be used to obtain efficient constructions of non-malleable codes against
tampering classes that contain circuits of unbounded polynomial size (e.g., $\ACZ$ circuits).  Cheraghchi and Guruswami~\cite{CG16} in an independent
work showed the existence of unconditionally secure
non-malleable codes (with no CRS)
against tampering families $\mathcal{F}$ of bounded size via a
randomized construction.
However their construction is inefficient for negligible error (\violet{and also does not apply to $\ACZ$ due to the requirement of bounded size}). 

Faust et al.~\cite{C:FHMV17} 
gave constructions of (a weaker notion of) non-malleable codes against space-bounded 
tampering in the random oracle model.

In very recent work, Ball et al.~\cite{BDKM17} presented a general
framework for converting average-case bounds for a class $C$ into
efficient non-malleable codes against the same class $C$ in the CRS model and under 
cryptographic assumptions. Among several applications of their framework, they give a construction of non-malleable codes against $\ACZ$ tampering circuits in the CRS model under these assumptions (in fact, circuits of depth up to $\Theta(\log(n)/\log\log(n))$, like in our work).
In contrast, our constructions are unconditional. 

	

\section{Preliminaries}
\subsection{Basic Notation}
   For a positive integer $n$, let $[n]$ to denote $\{1,\dots,n\}$. For  $x=(x_1,\dots,x_n)\in\zo^n$, $\|x\|_0$ denotes the number of $1$'s in $x$.  For $i\leq j\in[n]$, we define $x_{i:j}:=(x_i,\dots,x_j)$. For a set $S\subseteq[n]$, $x_S$ denotes the projection of $x$ to $S$. For $S\in[n]^m$, $x_S:=(x_{S_1},\dots,x_{S_m})$. 
    For $x,y\in\zo^n$, if they disagree on at least $\varepsilon\cdot n$ indices, we say they are $\varepsilon$-far, otherwise, they are $\varepsilon$-close to each other.
   
   For a set $\Sigma$, we use $\Sigma^{\Sigma}$ to denote the set of all functions from $\Sigma$ to $\Sigma$. 
   Given a distribution $\mathcal{D}$, $z\leftarrow \mathcal{D}$ denotes sample $z$ according to $\mathcal{D}$. For two distributions $\mathcal{D}_1,\mathcal{D}_2$ over $\Sigma$, their statistical distance is defined as  $\Delta(\mathcal{D}_1,\mathcal{D}_2):=\frac{1}{2}\sum_{z\in\Sigma}|\mathcal{D}_1(z)-\mathcal{D}_2(z)å|$.
	 
	 We say $g(n)=\tilde{O}(f(n))$ if $g(n)=O(n^{\epsilon}f(n))$ for all $\epsilon>0$.
   
\subsection{Non-malleable Reductions and Codes}
	
		\begin{definition}[Coding Scheme] \cite{DPW10}
	  	A \emph{Coding scheme}, $(\E, \D)$, consists of a randomized
	 		encoding function $\E\colon \{0,1\}^k \mapsto \{0,1\}^n$ and a
	  	decoding function $\D\colon \{0,1\}^n \mapsto
	  	\{0,1\}^k\cup\{\bot\}$ such that $\forall x \in \{0,1\}^k,
	  	\Pr[\D(\E(x))=x]=1$ (over randomness of $\E$). 
		\end{definition}
		
Non-malleable codes were first defined in~\cite{DPW10}. Here we use a simpler, but equivalent, definition based on the following notion of non-malleable reduction by Aggarwal et al.~\cite{ADKO15}. 
		\begin{definition}[Non-Malleable Reduction] \cite{ADKO15} \label{def:red}
		  Let $\mathcal{F} \subset A^A$ and $\mathcal{G} \subset B^B$ be some classes of functions.
		  We say $\mathcal{F}$ \emph{reduces to} $\mathcal{G}$,
		  $(\mathcal{F} \Rightarrow \mathcal{G}, \varepsilon)$,
		  if there exists an efficient (randomized)  
		  encoding function $\E: B \to A$,  
		  and an efficient 
		  decoding function $\D: A \to B$,
		  such that
		  \begin{enumerate}[(a)]
		  \item $\forall x \in B, \Pr[\D(\E(x))=x]=1$ (over the
		  randomness of $\E$). 
		  \item $\forall f \in \mathcal{F},\exists G$ s.t. $ \forall x \in B$,
		    $\Delta(\D(f(\E(x)));G(x))\leq \varepsilon$,
		    where $G$ is a distribution over $\mathcal{G}$ and $G(x)$ denotes the distribution $g(x)$, where $g \leftarrow G$.
		  \end{enumerate}
		  If the above holds, then $(\E,\D)$ is an \emph{$(\mathcal{F},\mathcal{G},\varepsilon)$-non-malleable reduction}.
		\end{definition}

		\begin{definition}[Non-Malleable Code] \cite{ADKO15}
		  Let $\textsc{NM}_k$ denote the set of \emph{trivial manipulation functions} on $k$-bit strings,
		  consisting of the identity function $\text{id}(x)=x$ and all constant functions $f_c(x)=c$, where $c \in \{0,1\}^k$.

		  A coding scheme $(\E,\D)$ defines an \emph{$(\mathcal{F}_{n(k)},k,\varepsilon)$-non-malleable code},
		  if it defines an $(\mathcal{F}_{n(k)},\textsc{NM}_k,\varepsilon)$-non-malleable reduction.
			
			Moreover, the rate of such a code is taken to be $k/n(k)$.
		\end{definition}
		
		  The following useful theorem allows us to compose non-malleable reductions.
		\begin{theorem} [Composition] \cite{ADKO15}  \label{thm:composition}
		  If $(\mathcal{F} \Rightarrow \mathcal{G}, \varepsilon_1)$
		  and $(\mathcal{G} \Rightarrow \mathcal{H}, \varepsilon_2)$,
		  then $(\mathcal{F} \Rightarrow \mathcal{H}, \varepsilon_1 + \varepsilon_2)$.
		\end{theorem}
		
		\subsection{Tampering Function Families}
		\subsubsection{Split-State and Local Functions}
			
			\begin{definition}[Split-State Model] \cite{DPW10}
				The \emph{split-state model}, $\textsc{SS}_k$, denotes the set of all functions:
				\[\{f=(f_1,f_2):\ f(x)=(f_1(x_{1:k})\in\{0,1\}^k,f_2(x_{k+1:2k})\in\{0,1\}^{k}) \mbox{ for } x\in\{0,1\}^{2k}\}.\]
			\end{definition}
			
			\begin{theorem}[Split-State NMC]\cite{Li18}
				\label{thm:ss-nmc}
				For any $n\in\mathbb{N}$,
				there exists an explicit, efficient non-malleable code in the 2-split-state model ($\textsc{SS}_n$) with rate $k/n=\Omega(\log\log n/ \log n)$ and error $2^{-\Omega(k)}$
			\end{theorem}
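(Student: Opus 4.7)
The plan is to invoke the Cheraghchi--Guruswami framework, which reduces the construction of a split-state non-malleable code with $k$-bit messages and $n$-bit codewords to the construction of a seedless two-source non-malleable extractor $\mathrm{nmExt}\colon \{0,1\}^{n/2} \times \{0,1\}^{n/2} \to \{0,1\}^k$ with error $\varepsilon$. The encoding of a message $m$ rejection-samples a pair $(L,R)$ conditioned on $\mathrm{nmExt}(L,R) = m$, and decoding is simply evaluating $\mathrm{nmExt}$ on the received codeword. Non-malleability of the extractor---namely, that $\mathrm{nmExt}(L',R')$ is statistically close to uniform conditioned on $\mathrm{nmExt}(L,R)$ for any split-state tampering $(L,R) \mapsto (L',R')$ with $(L',R') \neq (L,R)$---then translates into non-malleability of the code. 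To achieve rate $k/n = \Omega(\log\log n/\log n)$ with error $2^{-\Omega(k)}$, it therefore suffices to construct a two-source non-malleable extractor with output length $k = \Omega(n \log\log n/\log n)$ and inverse-exponential error.

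The main technical task is designing this non-malleable extractor. I would follow the now-standard pipeline built on \emph{correlation breakers with advice}: given split-state sources $(L,R)$ and adversarially tampered versions $(L',R')$, first produce a short advice string that is guaranteed to differ on the tampered and untampered copies with overwhelming probability (for instance, by combining the inner-product two-source extractor with the tampering structure to pick out a distinguishing coordinate). Next, feed this advice into an advice-based correlation breaker, which outputs a string that is near-uniform even conditioned on its tampered counterpart. Finally, expand this short output to the desired length via several rounds of alternating extraction driven by the two sources. Each building block can be instantiated from the best known strong seeded extractors and near-linear-entropy two-source extractors.

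The main obstacle---and the core of the argument---is tuning all parameters so that the final output length reaches $\Omega(n\log\log n/\log n)$ while the cumulative error stays below $2^{-\Omega(k)}$. Each round of alternating extraction and each correlation-breaker step consumes entropy from the two sources in the form of seeds; off-the-shelf correlation breakers have seed length $\mathrm{poly}(\log n)$, which would cap the output at a much smaller length. The crucial design choice is to use a correlation breaker whose seed length is only $O(\log n \cdot \log\log n)$ and to carefully book-keep the entropy ``spent'' across all stages so that enough entropy remains in both $L$ and $R$ at the final extraction step to yield an output of the claimed length. Verifying this entropy budget, and checking that the errors of the extractor components compose to $2^{-\Omega(k)}$, is the delicate combinatorial heart of the construction.
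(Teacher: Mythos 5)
This statement is not proved in the paper at all: it is an external result imported verbatim from Li~\cite{Li18}, used as a black box at the final composition step. So there is no in-paper argument to compare against; the relevant comparison is with the actual proof in~\cite{Li18}. Your outline does match the strategy of that work: the Cheraghchi--Guruswami reduction from split-state non-malleable codes to seedless two-source non-malleable extractors, followed by the advice-generation plus correlation-breaker pipeline, with the rate improvement coming from a more entropy-efficient correlation breaker. In that sense you have correctly identified the right framework and the right bottleneck.

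However, as a proof the proposal has a genuine gap: everything that makes the theorem true is deferred. The claim that a correlation breaker with seed length $O(\log n \cdot \log\log n)$ exists with the required strength, and that the entropy bookkeeping across all rounds of alternating extraction leaves enough entropy to output $\Omega(n \log\log n/\log n)$ bits with error $2^{-\Omega(k)}$, is precisely the technical content of~\cite{Li18}; asserting that it ``is the delicate combinatorial heart'' does not establish it. Two further points need care even granting the extractor. First, the Cheraghchi--Guruswami reduction incurs an error blow-up of roughly a $2^{k}$ factor when passing from extractor error to code error, so the extractor must be built with error $2^{-ck}$ for a sufficiently large constant $c$, not merely $2^{-\Omega(k)}$; this affects the parameter tuning. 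Second, rejection sampling a preimage of $\mathrm{nmExt}$ is not efficient in general; to get an \emph{efficient} code one needs the non-malleable extractor to support efficient uniform preimage sampling, which is an additional structural property that must be engineered into the construction rather than assumed. Since the present paper relies on Theorem~\ref{thm:ss-nmc} only as a cited ingredient, the appropriate resolution is to cite~\cite{Li18} rather than to reprove it.
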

			
			\begin{definition}[Local Functions]
				Let  $f: \{0,1\}^n \to \{0,1\}^m$ be a function. We say output $j$ of $f$ \emph{depends} on input $i$  
				if there exists $x,x'\in\{0,1\}^n$ that differ only in the $i$th coordinate such that $f(x)_j\neq f(x')_j$. We say $f$ is $\ell$-local or in the class $\Local^{\ell}$, if every output bit $f_j$ depends on at most $\ell$ input bits. 
			\end{definition}

		\subsubsection{Small-Depth Circuits and Decision Trees}

		 Let $\AC_d(S)$ denote alternating depth $d$ circuits of size at most $S$ with unbounded fan-in. 
		 Let $w$-$\AC_d(S)$ denote alternating depth $d$ circuits of size at most $S$ with fan-in at most $w$ at the first level and unbounded fan-in elsewhere.   For depth $2$ circuits, a DNF is an OR of ANDs (terms) and a CNF is an AND or ORs (clauses). The {\em width} of a DNF (respectively, CNF) is the maximum number of variables that occur in any of its terms (respectively, clauses).  We use $w$-DNF to denote the set of DNFs with width at most $w$.
		 Let $\DT(t)$ denote  decision trees with depth at most $t$. 
		We say that a multiple output function $f=(f_1,\dots,f_m)$ is in $\cC$ if $f_i\in \cC$ for any $i\in[m]$.

\subsubsection{Leaky Function Families}
		Given an arbitrary class of tampering functions, we consider a variant of the class of tampering functions which may depend in some limited way on limited leakage from the underlying code word.

		\begin{definition} [Leaky Function Families]
		Let $\leaky^{i,m,N}[\cC]$ denote tampering functions generated via the following game:
		\begin{enumerate}
			\item The adversary first commits to $N$ functions from a class $\cC$,
			$F_1,\ldots,F_N = \vF$.
			
			(Note: $F_j:\zo^N\to\zo$ for all $j\in[N]$.) 
			
			\item The adversary then has $i$-adaptive rounds of leakage. In each round $j\in[i]$,
			\begin{itemize}
				\item the adversary selects $s$ indices from $[N]$, denoted $S_j$,
				\item the adversary  receives $\vF(x)_{S_j}$.
				\end{itemize}
				
				Formally, we take $h_j:\zo^{m(j-1)}\to[N]^m$ to be the selection function such that    \[h_j(F(X)_{S_1},\ldots,F(X)_{S_{j-1}})=S_{j}.\]
				Let $h_1$ be the constant function that outputs $S_1$.
				
			\item Finally, selects a sequence of $n$ functions $(F_{t_1},\ldots,F_{t_n})$ ($T=\{t_1,\ldots,t_n\}\subseteq[N]$ such that $t_1<t_2<\cdots<t_n$) to tamper with.
			
				Formally, we take $h:\zo^{mi}\to [N]^n$ such that
				$h(F(X)_{S_1},\ldots,F(X)_{S_i})=T$.
		\end{enumerate}
		Thus, any $\tau\in\leaky^{i,m,N}[\cC]$ can be described via $(\vF,h_1,\cdots,h_i,h)$. In particular, we take $\tau = \eval(\vF,h_1,\cdots,h_i,h)$ to denote the function whose output given input $X$ is $T(X)$, where $T$ is, in turn, outputted by the above game given input $X$ and adversarial strategy $(\vF,h_1,\cdots,h_i,h)$.
		\end{definition}
\iffull
\iffull
\subsection{Pseudorandom Ingredients}
\subsubsection{A Binary Reconstructible Probabilistic Encoding Scheme}
\else
\section{Pseudorandom Ingredients}
\subsection{A Binary Reconstructible Probabilistic Encoding Scheme}
\fi

Reconstructable Probabilistic Encoding (RPE) schemes were first introduced by Choi et al.~\cite{CDMW08, CDMW16}. Informally, RPE is a combination of error correcting code and secret sharing, in particular, it is an error correcting code with an additional secrecy property and reconstruction property.
\begin{definition}[Binary Reconstructable Probabilistic Encoding] 
	\cite{CDMW08,CDMW16}
	We say a triple $(\E, \D, \R)$ is a {\sf binary reconstructable probabilistic
		encoding scheme} with parameters $(k, n, \cerr, \csec)$, where
	$k, n \in \bbN$, $0 \leq \cerr,\csec < 1$, if it
	satisfies the following properties: 
	
	\begin{enumerate}
		\item \textbf{Error correction.} 
		$\E\colon \zo^k \rightarrow \zo^n$ is an efficient probabilistic
		procedure, which maps a message $x \in \zo^k$ to a distribution over
		$\zo^n$. If we let $\cC$ denote the support of $\E$, any two
		strings in $\cC$ are $2\cerr$-far.
		Moreover, $\D$ is an efficient procedure that
		given any $w' \in \zo^n$ that is $\epsilon$-close to some string $w$ in
		$\cC$ for any $\epsilon \leq \cerr$, outputs $w$ along with a consistent $x$.
		\item \textbf{Secrecy of partial views.} 
		For all $x \in \zo^k$ and any non-empty set $S \subset [n]$ of size $\leq \lfloor\csec\cdot n\rfloor$, $\E(x)_S$ is
		identically distributed to the uniform distribution over $\zo^{|S|}$.

		\item \textbf{Reconstruction from partial views.} 
		$\R$ is an efficient procedure that given any set $S \subset [n]$ of size $\leq \lfloor\csec\cdot n\rfloor$,
		any $\hat{c} \in \zo^n$, and any $x \in \zo^k$, samples
		from the distribution $\E(x)$ with the constraint  $\E(x)_S={\hat{c}}_S$.  
	\end{enumerate}
\end{definition}

\begin{lemma} \cite{CDMW08,CDMW16}
	\label{lem:rpe}
	For any $k \in \bbN$,
	there exist constants $0<\crate,\cerr,\csec<1$ such that
	there is a binary RPE scheme with parameters
	$(k, \crate k, \cerr, \csec)$.
\end{lemma}

To achieve longer encoding lengths $n$, with the same $\cerr,\csec$  parameters, one can simply pad the message to an appropriate length.  

RPE was been used in Ball et al.\cite{BDKM16} for building non-malleable reductions from local functions to split state functions.  However, for all reductions in our paper, error correction property is not necessary (RPE $\cerr=0$ is adequate). In addition, we observe that RPEs with parameters $(k,n,0,\csec)$ are implied by any linear error correcting code with parameters $(k,n,d)$ where $k$ is the message length, $n$ is the codeword length, $d:=\csec\cdot n+1$ is the minimal distance. 

\begin{lemma}\label{lm:wRPE}
	Suppose there exists a binary linear error correcting code with parameters $(k,n,d)$, then there is a binary RPE scheme with parameters $(k,n,0,(d-1)/n)$.
\end{lemma}

\begin{proof}
	For a linear error correcting code with $(k,n,d)$,  let $A$ denote its encoding  matrix, $H$ denote its parity check matrix. Let $B$ be a matrix so that $BA=I$ where $I$ is the $k\times k$ identity matrix (such $B$ exists because $A$ has  rank $k$ and can be found efficiently). By property of parity check matrix, $HA=\bold{0}$ and $Hs\neq 0$ for any $0< ||s||_0<d$  where $\bold{0}$ is the $(n-k)\times k$ all $0$ matrix.
	
	We define $(\E,\D,\R)$ as follows: for $x\in\zo^k$ and randomness $r\in\zo^{n-k}$, $\E(x;r):=B^Tx+H^Tr$, for $c\in\zo^{n}$; $\D(c):=A^Tc$; given $S \subset [n]$ of size $\leq d-1$ , $\hat{c} \in \zo^n$,  $x \in \zo^k$, $\R$ samples $r$ uniformly from the set of solutions to $(H^Tr)_S=(\hat{c}-B^Tx)_S$ then outputs $\E(x;r)$. 
	
	$(\E,\D)$ is an encoding scheme because  $\D\circ\E =A^TB^T=I^T=I$.  For secrecy property, note that for any non-empty $S\subseteq[n]$ of size at most $d-1$, $(Hr)_S$ is distributed uniformly over $\zo^{|S|}$, because for any $a\in\zo^{|S|}$, \[\Pr_{r}[(H^Tr)_S=a]=\E[\Pi_{i\in S}\frac{1+(-1)^{(H^Tr)_i+a_i}}{2}]=2^{-|S|}\sum_{S'\subseteq S} \E[\Pi_{i\in S'}(-1)^{(H^Tr)_i+a_i}]=2^{-|S|},\]
	where the last equality is because the only surviving term is $S'=\emptyset$ and for other $S'$, $\sum_{i\in S'}H^T_i\neq 0$ so $\E[\Pi_{i\in S'}(-1)^{(H^Tr)_i}]=0$.
	It implies $\E(x)_S$ is also distributed uniformly over $\zo^{S}$. By definition, $R$ satisfies reconstruction property.  Hence $(\E,\D,\R)$ is a binary RPE with parameters $(k,n,0,(d-1)/n)$.
	
\end{proof}

\iffull
\subsubsection{A Simple $\sigma$-wise Independent Generator}

\else
\subsection{A Simple $\sigma$-wise Independent Generator}
\fi
\begin{definition} [Bounded-independent Generator] We say a distribution $\mathcal{D}$ over $\zo^n$ is {\em $\sigma$-wise independent} if for any $S\subseteq [n]$ of size at most $\sigma$, $\mathcal{D}_S$ distributes identically to the uniform distribution over $\zo^{|S|}$. We say function $G\colon\zo^s\rightarrow\zo^n$ is a $\sigma$-wise independent generator if $G(\zeta)$ is $\sigma$-wise independent where $\zeta$ is uniformly distributed over $\zo^s$.	
\end{definition}

The simple Carter-Wegman hashing construction based on random polynomials suffices for our purposes.
\begin{lemma}
	\label{lem:gen}
	\cite{CW81}
	There exists an (explicit) $\sigma$-wise independent generator:
	$G:\zo^{(\sigma+1)\log n}\to\zo^{n}$, computable in time $\tilde{O}(\sigma n)$.
	
	Moreover, $G:\zo^{(\sigma+1)m}\to\zo^n$ can be constructed such that for an subset $S\subseteq[n]$ of size $\sigma$, $G(\zeta)_S \equiv X_1,\ldots,X_\sigma$ (for uniformly chosen $\zeta$) where (1) $X_i$'s are independent Bernoullis with $\Pr[X_i=1]=q/d$ for $q\in\{0,\ldots,d\}$, and (2) $m=\max\{\log n,\log d\}$.
\end{lemma}

Let $G_{\sigma,p}$ denote a $\sigma$-wise independent Carter-Wegman generator with bias $p$, and $G_{\sigma}$ such a generator with $p=1/2$

The following useful theorem gives Chernoff-type concentration bounds for $\sigma$-wise independent distributions. 
\begin{theorem}[\cite{SSS95}]
	\label{thm:pr-chernoff}
	If $X$ is a sum of $\sigma$-wise independent random indicator variables with $\mu=\bbE[X]$, then $\forall\epsilon: 0<\varepsilon\leq1,\sigma\leq\varepsilon^2\mu e^{-1/3}$, 
	$\Pr[|X-\mu|>\varepsilon\mu] < \exp(-\lfloor \sigma/2 \rfloor)$.
\end{theorem}	
\iffull
\else
\iffull
\subsubsection{Helpful Functions.}

\else
\subsection{Helpful Functions.}
\fi

Lastly, we define some convient functions. For a random restriction  $\rho=(\rho^{(1)},\rho^{(2)})\in\{0,1\}^{n}\times\zo^{n}$, $\extInd(\rho^{(1)}):=(i_1,\ldots,i_k)\in [n+1]^k$ are the last $k$ indices of 1s in $\rho^{(1)}$ where $i_1\leq i_2\dots\leq i_k$ and $i_j=n+1$ for $j\in[k]$ if such index doesn't exist ($k$ should be obvious from context unless otherwise noted).  

We define a pair of functions for embedding and extracting a string $x$  according to a random restriction, $\rho$. Let $\embed:\zo^{k+2n}\to\zo^n$, 
such that for $\rho=(\rho^{(1)},\rho^{(2)})\in\{0,1\}^{n}\times\zo^{n}$ and $x\in\{0,1\}^k$, and $i\in[n]$,
	
	\begin{equation*}
		\embed(x,\rho)_i = 
		\left\{\begin{array}{rl}
			x_j & \mbox{if } \exists j\in[k]: i=\extInd(\rho^{(1)})_j \\
			\rho_i^{(2)} & \mbox{otherwise}\\
		\end{array}\right.
	\end{equation*}
	And, let $\extract:\zo^{2n}\to\zo^k\times\{\bot\}$ be such that if $c\in\zo^n,\rho^{(1)}\in\{0,1\}^n$, and $\|\rho^{(1)}\|_0\geq k$,
	then $\extract(c,\rho^{(1)})=c_{\extInd(\rho^{(1)})}$.
	Otherwise, $\extract(c,\rho^{(1)})=\bot$.

Note that, for any $\rho$ such that  $\|\rho^{(1)}\|_0\geq k$, $\extract(\embed(x,(\rho^{(1)},\rho^{(2)})),\rho^{(1)})=x$.
\fi
\fi
\iffull
\subsubsection{The Pseudorandom Switching Lemma of Trevisan and Xue}
\else
\subsection{The Pseudorandom Switching Lemma of Trevisan and Xue}
\fi
\begin{definition}
	\label{def:enc-prr}
	Fix $p\in(0,1)$.
	A string $s\in\zo^{n\times \log(1/p)}$ encodes a subset 
	$L(s)\subseteq[n]$ as follows:
	for each $i\in[n]$,
	\[ i\in L(s) \iff s_{i,1}=\cdots=s_{i,\log(1/p)}=1.\]
\end{definition}

\begin{definition}
	\label{def:prr}
	Let $\cD$ be a distribution over $\zo^{n\log(1/p)}\times\zo^n$.
	This distribution defines a distribution $\cR(\cD)$ over restrictions $\{0,1,*\}^n$,
	where a draw $\rho\gets\cR(\cD)$ is sampled as follows:
	\begin{enumerate}
		\item Sample $(s,y)\gets \cR(\cD)$,
		where $s\in\zo^{n\log(1/p)},y\in\zo^n$.
		\item Output $\rho$ where
		\begin{align*}
			\rho_i := \left\{
			\begin{array}{rl}
				y_i &\mbox{if } i\notin L(s)\\
				* &\mbox{otherwise}
			\end{array}
			\right.
		\end{align*}
	\end{enumerate}
\end{definition}

\begin{theorem}[Polylogarithmic independence fools CNF formulas~\cite{Baz09,Raz09}] 
\label{thm:bazzi} 
The class of $M$-clause CNF formulas is $\epsilon$-fooled by $O((\log(M/\epsilon))^2)$-wise independence. 
\end{theorem}

\begin{theorem}[A Pseudorandom version of H{\aa}stad's switching lemma~\cite{TX13}] 
\label{thm:TX-SL} 
Fix $p,\delta \in (0,1)$ and $w, S, t\in \bbN$. There exists a value $r \in \bbN$, 
\[ r = \poly(t,w,\log(S), \log(1/\delta), \log(1/p)),\footnote{The exponent of this polynomial is a fixed absolute constant independent of all other parameters.}  \]
 such that the following holds.  Let $\cD$ be any $r$-wise independent distribution over $\zo^{n\times \log(1/p)} \times \zo^n$ 
  If $F : \zo^n \to \zo$ is a size-$S$ depth-$2$ circuit with bottom fan-in $w$, then 
 \[ \Pr\big[\, \DT(F \uhr \vrho) \ge t \, \big] \le 2^{w+t+1} (5pw)^t + \delta, \]
 where the probability is taken with respect to a pseudorandom restriction $\vrho\leftarrow\cR(\cD)$.  
\end{theorem}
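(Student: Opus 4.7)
The plan is to derive this statement as a direct corollary of two ingredients: (i) the main technical lemma of Trevisan--Xue (Lemma~7 of~\cite{TX13}), which reduces the switching-lemma bad event to a statement about fooling a specific class of CNF formulas, and (ii) the Bazzi--Razborov theorem (Theorem~\ref{thm:bazzi}) that polylogarithmic-wise independence fools CNFs. In other words, nothing new needs to be proven about restrictions or decision trees here; the point is that the Trevisan--Xue machinery expresses the ``$\DT(F\uhr\rho) \ge t$'' event as a small-CNF test, so it suffices to fool that CNF class.

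First, I would invoke Lemma~7 of~\cite{TX13}, which states that for any distribution $\cD'$ over $\zo^{n\log(1/p)}\times\zo^n$ that $\epsilon$-fools the class of CNFs with $M := S\cdot 2^{w(\log(1/p)+1)}$ clauses, one has
\[
\Pr_{\vrho\leftarrow\cR(\cD')}\!\big[\,\DT(F\uhr\vrho)\ge t\,\big] \;\le\; 2^{w+t+1}(5pw)^t \;+\; \epsilon\cdot 2^{(t+1)(2w+\log S)}.
\]
The $2^{w+t+1}(5pw)^t$ term matches Hastad's classical bound exactly, and the additive $\epsilon\cdot 2^{(t+1)(2w+\log S)}$ term is the ``fooling error'' paid for using a pseudorandom $\cD'$ in place of truly random restrictions. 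So the question reduces entirely to choosing the right $\epsilon$ and controlling the independence requirement needed to $\epsilon$-fool this CNF class.

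Second, I would set
\[
\epsilon \;:=\; \delta \cdot 2^{-(t+1)(2w+\log S)},
\]
so that the pseudorandom error term becomes exactly $\delta$. By Theorem~\ref{thm:bazzi}, every $M$-clause CNF is $\epsilon$-fooled by any $r$-wise independent distribution provided
\[
r \;=\; O\!\big((\log(M/\epsilon))^2\big).
\]
Plugging in $M = S\cdot 2^{w(\log(1/p)+1)}$ and the chosen $\epsilon$, the quantity $\log(M/\epsilon)$ becomes a polynomial in $t,w,\log S,\log(1/\delta),\log(1/p)$, so $r$ is of the form $\poly(t,w,\log S,\log(1/\delta),\log(1/p))$, matching the statement of the theorem. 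Any $r$-wise independent $\cD$ is in particular such a fooling distribution $\cD'$, so applying the Trevisan--Xue lemma with this $\cD$ yields
\[
\Pr_{\vrho\leftarrow\cR(\cD)}\!\big[\,\DT(F\uhr\vrho)\ge t\,\big] \;\le\; 2^{w+t+1}(5pw)^t + \delta,
\]
as desired.

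The only nontrivial obstacle is bookkeeping the parameter dependencies: one must check that taking $\log(M/\epsilon)$ with $M$ exponential in $w\log(1/p)$ and $\epsilon$ exponentially small in $(t+1)(2w+\log S)$ still leaves $r$ polynomial in the advertised parameters (and not, say, exponential in $w$ or $t$). This works because both $\log M$ and $\log(1/\epsilon)$ are themselves polynomial in $t,w,\log S,\log(1/\delta),\log(1/p)$, and squaring a polynomial gives another polynomial. Beyond this parameter accounting, no further combinatorial argument is needed; the heavy lifting is already done by~\cite{TX13,Baz09,Raz09}.
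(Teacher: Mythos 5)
Your proposal is correct and follows exactly the paper's own argument: invoke Lemma~7 of~\cite{TX13} to reduce the bad event to $\epsilon$-fooling $M$-clause CNFs with $M = S\cdot 2^{w(\log(1/p)+1)}$, set $\epsilon = \delta\cdot 2^{-(t+1)(2w+\log S)}$, and apply Theorem~\ref{thm:bazzi} to get $r = O((\log(M/\epsilon))^2) = \poly(t,w,\log S,\log(1/\delta),\log(1/p))$. Nothing to add.
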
 

\begin{proof} 
By Lemma 7 of~\cite{TX13}, any distribution $\cD'$ over $\zo^{n\times \log(1/p)} \times \zo^n$ that $\epsilon$-fools the class of all 
$(S\cdot 2^{w(\log(1/p)+1)})$-clause CNFs satisfies
\[ \Pr\big[\, \DT(F \uhr \vrho) \ge t \, \big] \le 2^{w+t+1} (5pw)^t + \epsilon\cdot 2^{(t+1)(2w+\log S)},  \]
where the probability is taken with respect to a pseudorandom restriction $\vrho\leftarrow\cR(\cD')$.  By Theorem~\ref{thm:bazzi}, the class of $M := (S\cdot 2^{w(\log(1/p)+1)})$-clause CNF formulas is 
\[ \epsilon := \delta \cdot 2^{-(t+1)(2w+\log S)}\] 
fooled by $r$-wise independence where 
\[ r= O((\log(M/\epsilon))^2) =   \poly(t,w,\log(S), \log(1/\delta), \log(1/p)), \]
and the proof is complete.  
\end{proof} 
Taking a union bound we get the following corollary.
\begin{corollary}
	\label{lem:psl}
Fix $p,\delta \in (0,1)$ and $w, S, t \in \bbN$. There exists a value $r \in \bbN$, 
\[ r = \poly(t,w,\log(S), \log(1/\delta), \log(1/p)),   \]
 such that the following holds.  Let $\cD$ be any $r$-wise independent distribution over $\zo^{n\times \log(1/p)} \times \zo^n$. 
 Let $F_1,\ldots,F_M$ be $M$ many size-$S$ depth-$2$ circuits with bottom fan-in $w$.  Then 
\begin{equation} \Pr_{\vrho\leftarrow\cR_p}\Big[ \exists\, j \in [M] \text{~such that~} \DT(F_j \uhr \vrho) \ge t \,\Big] \le M\cdot \left(2^{w+t+1} (5pw)^{t} + \delta\right).  \label{eq:pseudorandom-failure} 
\end{equation} 
 \end{corollary} 

\iffull

\else
\fi

\section{Non-Malleable Codes for Small-Depth Circuits}

\subsection{NM-Reducing Small-Depth Circuits to Leaky Local Functions}
\begin{lemma} \label{lem:ac0}
	For $S,d,n,\ell\in\bbN, p,\delta\in(0,1)$, there exist $\sigma=\poly(\log{\ell}, \log(\ell S),\log(1/\delta),\log(1/p))$ and $m=O(\sigma\log{n})$ such that,  for any $2m\leq k\leq n (p/4)^d$, 
	\[(\AC_d(S)\implies\leaky^{d,m,n}[\Local^{\ell}],d\varepsilon)\]
	where
	\[\varepsilon= nS \left(2^{2\log{\ell}+1}(5p\log{\ell})^{\log{\ell}}+ \delta\right) +\exp(-\frac{\sigma}{2\log(1/p)}).\]  
\end{lemma}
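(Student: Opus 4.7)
The plan is to construct $(\E,\D)$ that nests $d$ rounds of pseudorandom restrictions, each round contributing one round of leakage in the leaky-local game. Concretely, $\E(x)$ samples $d$ independent seeds $\zeta_1,\ldots,\zeta_d\in\zo^{s}$ with $s=\Theta(\sigma\log n)$, expands each via the Carter--Wegman $\sigma$-wise independent generator of Lemma~\ref{lem:gen} (with bias $p$) into a restriction $\rho_i=(\rho_i^{(1)},\rho_i^{(2)})$, and composes them via $\embed$ from the innermost level outward starting from the message $x$. At each level I prepend an RPE-encoding of $\zeta_i$ of length $m=\Theta(\sigma\log n)$ at a fixed location (Lemma~\ref{lm:wRPE}). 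The decoder $\D$ peels off $\zeta_1,\ldots,\zeta_d$ in order, regenerates the corresponding $\rho_i$, and invokes $\extract$ to descend one level. Correctness follows from $\extract\circ\embed=\mathrm{id}$ whenever every $\|\rho_i^{(1)}\|_0$ is large enough; by Theorem~\ref{thm:pr-chernoff} applied to the $\sigma$-wise independent bits defining $\rho_i^{(1)}$, this holds provided $k\le n(p/4)^d$ with total failure probability at most the $\exp(-\sigma/(2\log(1/p)))$ term in the stated bound.

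For non-malleability I would apply Corollary~\ref{lem:psl} $d$ times in succession. Since $\zeta_1$ is uniform and the generator is $\sigma$-wise independent, $\rho_1$ meets the hypothesis of the corollary. Union-bounding over the at most $nS$ bottom depth-$2$ subcircuits of $f\in\AC_d(S)$, with failure probability at most $\varepsilon = nS\bigl(2^{2\log\ell+1}(5p\log\ell)^{\log\ell}+\delta\bigr)$ each such subcircuit collapses under $\rho_1$ to a decision tree of depth at most $\log\ell$; absorbing these trees into the layer above yields a circuit in $\AC_{d-1}(S)$ with bottom fan-in $\log\ell$. Iterating $d-1$ more times (each with a fresh independent seed, so Corollary~\ref{lem:psl} applies unchanged) and union-bounding over rounds contributes a further $d\varepsilon$ and produces an $\ell$-local function of $x$.

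To cast the induced tampering as an element of $\leaky^{d,m,n}[\Local^\ell]$, the simulator, knowing $f$ but not the seeds, commits up front to $n$ functions in $\Local^\ell$ indexed so as to cover every possible choice of surviving coordinates across the $d$ levels. In round $i$ of the leakage game it requests the $m$ bits comprising the tampered RPE-encoding of $\zeta_i$; on the good switching event these bits are functions of $x$ through $f\uhr\rho_1\uhr\cdots\uhr\rho_{i-1}$, hence local, and decoding them via the RPE decoder yields $\tilde\zeta_i$, which the simulator uses to determine the query schedule for round $i{+}1$. At round $d$ the extracted seeds pin down the final $\ell$-local tampering to apply, so the adaptive $d$-round, $m$-bit-per-round structure of $\leaky^{d,m,n}$ matches our schedule exactly.

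The principal obstacle I anticipate is the first round: an arbitrary circuit in $\AC_d(S)$ can have bottom fan-in as large as $S$, while the switching-lemma factor $(5pw)^t$ is only useful for $w\approx\log\ell$. Handling this likely requires a separate ``killing'' argument showing that under a pseudorandom restriction of bias $p$, any bottom gate with fan-in substantially exceeding $\log\ell$ reduces to a constant with probability exponentially small in its fan-in (again leveraging $\sigma$-wise independence and Theorem~\ref{thm:pr-chernoff}), so that only gates of effective fan-in $\log\ell$ need be fed into the switching lemma and the per-round error remains $\varepsilon$. A secondary bookkeeping step is verifying that the $r$-wise independence required by Theorem~\ref{thm:TX-SL} with $w=t=\log\ell$ is absorbed into $\sigma=\poly(\log\ell,\log(\ell S),\log(1/\delta),\log(1/p))$, which holds since $r$ depends polynomially on the same parameters and a fresh independent seed is drawn at each level.
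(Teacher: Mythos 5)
Your overall strategy---iterated pseudorandom restrictions with the seed of each restriction RPE-encoded into the codeword itself, each level trading one layer of depth for one round of $m$-bit leakage---is the same as the paper's, which realizes it as a single reduction step (Lemma~\ref{lemma:psl-nmr}, Figure~\ref{fig:red}) composed $d$ times via Theorem~\ref{thm:composition} through the intermediate classes $\leaky^{i,m,n}[\AC_{d-i-1}(S)\circ t\text{-}\AC_2(2^tS)]$. However, there is a genuine gap at the heart of your non-malleability argument. You assert that ``since $\zeta_1$ is uniform and the generator is $\sigma$-wise independent, $\rho_1$ meets the hypothesis of the corollary,'' but Corollary~\ref{lem:psl} requires the \emph{joint} distribution over $\zo^{n\log(1/p)}\times\zo^n$---both the star-location string $s=G(\zeta)$ and the fixed-value string $y$---to be $\sigma$-wise independent. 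In your construction $y$ contains $\E_R(\zeta)$ at a fixed location, so $s$ and $y$ are correlated through $\zeta$; this is precisely the correlation that breaks the naive ``append the survivor locations'' approach discussed in Section~\ref{sec:techniques}. The paper closes this in Claim~\ref{cl:error}: by the secrecy property of the RPE, any $\le\csec m\ge\sigma$ coordinates of $\E_R(\zeta)$ are uniform \emph{independently of} $\zeta$, hence any $\sigma$ coordinates of $G(\zeta)\|\E_R(\zeta)\|U$ are jointly uniform. You have all the ingredients for this argument (you invoke the RPE for decodability) but never make it, and without it the switching lemma does not apply to your restriction distribution.

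A secondary divergence: you correctly identify that the first round is problematic because the bottom fan-in of an arbitrary $\AC_d(S)$ circuit can be as large as $S$, but your proposed fix---a separate ``killing'' argument showing wide bottom gates become constant under the pseudorandom restriction---is not what the paper does and would require its own (somewhat delicate) analysis under mere $\sigma$-wise independence. The paper instead views $F\in\AC_d(S)$ as a circuit in $1$-$\AC_{d+1}(S)$ by inserting a dummy layer of fan-in-$1$ gates (Claim~\ref{cl:width}); the bottom depth-$2$ subcircuits then have bottom fan-in $w=1$, Theorem~\ref{thm:TX-SL} applies with error $2^{t+2}(5p)^t+\delta$ per subcircuit, and the resulting $\AC_{d-1}(S)\circ\DT(t)$ circuit is absorbed into $\AC_{d-2}(S)\circ t\text{-}\AC_2(2^tS)$ for the subsequent rounds. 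This also makes explicit the size growth of the bottom DNFs/CNFs to $2^tS=\ell S$, which is why $\sigma$ must depend on $\log(\ell S)$---a point your bookkeeping acknowledges only in passing.
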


We define a simple encoding and decoding scheme (See Figure~\ref{fig:red} in
\iffull below) \else Appendix~\ref{sec:ac0-local}) \fi 
and show this scheme is a non-malleable reduction from (leaky) class $\mathcal{F}$ to (leaky) class $\mathcal{G}$ with an additional round of leakage if functions in $\mathcal{F}$ reduce to $\mathcal{G}$ under a suitable notion pseudorandom restrictions (recall definitions~\ref{def:enc-prr} \&~\ref{def:prr}).

\begin{lemma}\label{lemma:psl-nmr}
	Let $\mathcal{F}$ and $\mathcal{G}$ be two classes of functions.  Suppose for $n\in\bbN$, $p\in(0,1)$ and any $\sigma$-wise independent distribution $\cD$ over $\zo^{n\log(1/p)}\times\zo^n$, it holds that for any $F\colon\zo^n\rightarrow\zo\in\mathcal{F}$, 
	\[\Pr_{\rho\gets\cR(\cD)}\left[ F_\rho \text{ is not in } \mathcal{G}\right] \leq \varepsilon.\]
	Then for $i,N, k\in\mathbb{N}$,  $(\E^\star_{k,n,p,\sigma},\D^\star_{k,n,p,\sigma})$ defined in Figure~\ref{fig:red} is an
	\[(\leaky^{i,m,N}[\mathcal{F}]\implies\leaky^{i+1,m,N}[\mathcal{G}],N\varepsilon+\exp(-\frac{\sigma}{2\log(1/p)}))\]
	non-malleable reduction when $(4\sigma/\log(1/p))\leq k\leq (n-m)p/2$.
\end{lemma}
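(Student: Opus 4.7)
My plan is to have $\E^\star$ sample a seed $\zeta$ for a $\sigma$-wise independent generator, use it to produce the pseudorandom restriction $\rho = (\rho^{(1)},\rho^{(2)}) \gets \cR(\cD)$ of Definitions~\ref{def:enc-prr}--\ref{def:prr}, and output the concatenation of (i) a binary RPE encoding (Lemma~\ref{lem:rpe}) of $\zeta$ sitting in a designated $m$-coordinate block and (ii) $\embed(x,\rho)$ filling the remaining $n$ coordinates; $\D^\star$ reads off the prefix, RPE-decodes $\tilde{\zeta}$, regenerates $\tilde{\rho}$, and returns $\extract(\cdot,\tilde{\rho}^{(1)})$ on the suffix. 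Correctness fails only if $\|\rho^{(1)}\|_0 < k$; since each survival indicator has mean $p$, the expectation is at least $2k$ by the assumption $k \leq (n-m)p/2$, and because each indicator is determined by $\log(1/p)$ of the underlying $\sigma$-wise independent bits, Theorem~\ref{thm:pr-chernoff} bounds the failure probability by $\exp(-\sigma/(2\log(1/p)))$. I would absorb this vanishing slack by rejection-sampling inside $\E^\star$, so that part~(a) of Definition~\ref{def:red} holds with probability~$1$.

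\textbf{The simulator.} For part~(b), fix $\tau = \eval(\vF,h_1,\ldots,h_i,h) \in \leaky^{i,m,N}[\mathcal{F}]$. The plan is to exhibit a distribution $G$ over $\leaky^{i+1,m,N}[\mathcal{G}]$ whose outputs couple exactly to $\D^\star(\tau(\E^\star(x)))$ off a negligible bad set. A draw $g \gets G$ first samples $\zeta$ and $\rho$ as the encoder would, then commits, in the leaky-$\mathcal{G}$ game, to the $N$ restricted functions $\{F_j \uhr \rho\}_{j\in[N]}$. By the hypothesis together with a union bound over $j\in[N]$, all of these restricted functions lie in $\mathcal{G}$ except with probability $N\varepsilon$. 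The simulator then spends the extra $(i+1)$-st leakage round up front, querying the $m$ prefix indices to learn $\vF(X)_{[m]}$---precisely the adversarially tampered seed ciphertext---which it RPE-decodes to recover $\tilde{\zeta}$ and hence $\tilde{\rho}$. The remaining $i$ rounds simply replay the adaptive queries prescribed by $h_1,\ldots,h_i$ against the committed $\mathcal{G}$-functions, and $g$'s final selection is the tamper set $T = h(\cdots)$ re-indexed through $\extInd(\tilde{\rho}^{(1)})$, so that the output of $g(x)$ matches what $\D^\star$ would produce on the tampered codeword.

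\textbf{Accounting and the main obstacle.} Two bad events can break the coupling: (A) some $F_j \uhr \rho$ fails to lie in $\mathcal{G}$, with probability at most $N\varepsilon$ by hypothesis and a union bound; and (B) $\|\rho^{(1)}\|_0 < k$, with probability at most $\exp(-\sigma/(2\log(1/p)))$ by Theorem~\ref{thm:pr-chernoff} as above. Off both events, the coupling is exact: $\vF$ applied to $X = \embed(x,\rho)$ agrees bit-for-bit with $\vF \uhr \rho$ applied to $x$, so the tampered seed in the real experiment equals the tampered seed the simulator recovers via leakage, both pipelines arrive at the same $\tilde{\rho}$, and $\extract \circ \embed$ is the identity on the surviving coordinates. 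Summing delivers the claimed $N\varepsilon + \exp(-\sigma/(2\log(1/p)))$ error. The main technical subtlety I anticipate is verifying that a single $m$-bit leakage round really suffices to reconstruct $\tilde{\zeta}$: this requires the RPE-encoded seed, whose length scales with $\sigma \log(1/p)$, to fit inside the $m$-coordinate prefix, and the joint constraints $k \geq 4\sigma/\log(1/p)$ and $m = O(\sigma\log n)$ in the lemma statement are exactly what makes this budget feasible.
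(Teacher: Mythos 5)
Your overall architecture matches the paper's: embed the message in the survivors of a pseudorandom restriction, store an RPE-encoding of the seed in an $m$-bit block so the decoder can recover the survivor locations, and simulate by committing to the restricted functions and spending one extra leakage round to learn the tampered seed. However, there are two genuine gaps. First, your simulator places the extra leakage round \emph{up front} and queries $\vF(X)_{[m]}$, claiming this is ``precisely the adversarially tampered seed ciphertext.'' It is not: in $\leaky^{i,m,N}[\mathcal{F}]$ the tampered codeword is $\vF(X)_T$ for the \emph{adaptively} chosen set $T=h(\vF(X)_{S_1},\ldots,\vF(X)_{S_i})$, so the tampered seed encoding occupies indices $T_{[m]}=h(\cdots)_{[m]}$, not $[m]$, and these indices are only determined after the first $i$ rounds of leakage have been answered. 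The extra round must therefore be the \emph{last} round, with selection function $h'_{i+1}(y'_1,\ldots,y'_i):=h(y'_1,\ldots,y'_i)_{[m]}$ (as in the paper's Figure~\ref{fig:simulator}); as written, your simulator recovers the wrong $\tilde{\zeta}$ whenever $i\geq 1$, which is exactly the regime needed for the recursive depth reduction.

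Second, you invoke the hypothesis ``$\Pr_{\rho\gets\cR(\cD)}[F_\rho\notin\mathcal{G}]\leq\varepsilon$ for any $\sigma$-wise independent $\cD$'' without verifying that the restriction your encoder actually produces is drawn from such a $\cD$. This is not automatic and is the technical heart of the lemma: the fixed values $\rho^{(2)}=\E_R(\zeta)\|U$ contain an encoding of the \emph{same} seed $\zeta$ that generates the survival pattern $\rho^{(1)}=L(G(\zeta))$, so the two halves of the restriction are correlated. The paper's Claim~\ref{cl:error} resolves this using the secrecy property of the RPE --- any $\leq\sigma$ coordinates of $\E_R(\zeta)$ are uniform independently of $\zeta$, so any $\sigma$ coordinates of the joint string $G(\zeta)\|\E_R(\zeta)\|U$ are uniform --- and this is precisely why an RPE (rather than an arbitrary encoding of the seed) and a \emph{pseudorandom} switching lemma are needed. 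Without this step the $N\varepsilon$ term in your accounting is unjustified. (Your use of rejection sampling in place of the fixed fallback seed $\zeta^*$ is a harmless variation, provided you charge the $\exp(-\sigma/(2\log(1/p)))$ statistical distance between the conditioned and unconditioned restriction distributions when applying the hypothesis, which your error accounting effectively does.)
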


To prove Lemma~\ref{lem:ac0}, we instantiate Lemma~\ref{lemma:psl-nmr} using the pseudorandom switching lemma of Theorem~\ref{thm:TX-SL} (in fact, Corollary~\ref{lem:psl}) and iteratively reduce $\AC_d(S)$ to leaky local functions. Each application of the reduction, after the first, will allow us to trade a level of depth in the circuit for an additional round of leakage until we are left with a depth-2 circuit. The final application of the reduction will allow us to convert this circuit to local functions at the expense of a final round of leakage.  
\iffull 
\iffull
\else
\section{NM-Reducing Small-Depth Circuits to Leaky Local Functions}
\label{sec:ac0-local}
In this section, we give a non-malleable reduction from small-depth circuits to leaky local functions.
\fi

\iffull
\subsubsection{Proof of Lemma~\ref{lemma:psl-nmr}}
\else
\subsection{Proof of Lemma~\ref{lemma:psl-nmr}}
\fi
The simple encoding and decoding scheme based on the pseudorandom switching lemma is defined in Figure~\ref{fig:red}.

\begin{figure}[ht!]
	\fbox{\parbox{\textwidth}{
			
			\small \center
			\begin{minipage}{0.95\textwidth}
				Take  $k,n,p,\sigma$ to be parameters.  
				
				Let $G=G_{\sigma}\colon\zo^{s(\sigma)}\to\zo^{n\log1/p}$ be an $\sigma$-wise independent generator
				from Lemma~\ref{lem:gen}. 
				
				Let $(\E_R,\D_R,\R_R)$ denote the RPE from lemma~\ref{lem:rpe} with codewords of length $m(s)\geq \sigma/\csec$.
				
				Let $\zeta^*\in\zo^{s(\sigma)}$ be some fixed string such that $\|L(G(\zeta^*))_{n-m+1,\ldots,n}\|_0\geq k$. 
				(For our choice of $G$, such a $\zeta^*$ can be found efficiently via interpolation.)

				\begin{itemize}
					\item[]{\bf $\E^\star(x)$:}
					
					\begin{myenum}
						\item Draw (uniformly) random seed $\zeta\gets\{0,1\}^s$ 
						and (uniformly) random string $U\gets\{0,1\}^{n-m}$.
						
						\item Generate pseudorandom restriction, $\rho=(\rho^{(1)},\rho^{(2)})$:
						\begin{itemize}
							\item[] $\rho^{(1)} \gets L(G(\zeta))$;
							($*$) If $\|L(G(\zeta))_{n-m+1,\ldots,n}\|_0< k$, set $\zeta=\zeta^{*}$. 
							
							\item[] $\rho^{(2)} \gets \E_R(\zeta)\|U$.
						\end{itemize}
						
						\item Output $c=\embed(x,\rho)$.
					\end{myenum}


					\item[]{\bf $\D^\star(\tilde{c})$:}
					\begin{myenum}
						\item Recover tampered seed: 
						$\tilde{\zeta}\gets\D_R(\tilde{c}_1,\ldots,\tilde{c}_m)$.
						
						If $\|L(G(\tilde{\zeta}))_{n-m+1:n}\|_0<k$, output $\bot$ and halt.
						\item Output
						$\extract\big(\tilde{c},L(G(\tilde{\zeta}))\big)$.
					\end{myenum}
				\end{itemize}
				\medskip
			\end{minipage}}}
			\caption{A Pseudorandom Restriction Based Non-Malleable Reduction, $(\E^\star_{k,n,p,\sigma},\D^\star_{k,n, p, \sigma})$}
			\label{fig:red}
		\end{figure}

		\begin{figure}[ht!]
			\fbox{\parbox{\textwidth}{
					
					\small \center
					\begin{minipage}{0.95\textwidth}
						given $\leaky^{i,m,N}[\mathcal{F}]$ tampering $\tau=(\vF,h_1,\ldots,h_i,h)$ output $\tau'=(\vF',h'_1,\ldots,h'_{i+1},h')$:
						\begin{myenum}
							\item Draw (uniformly) random seed $\zeta\gets\{0,1\}^s$ 
							and (uniformly) random string $R\gets\{0,1\}^{n-m}$.
							
							\item Generate pseudorandom restriction, $\rho=(\rho^{(1)},\rho^{(2)})$:
							\begin{itemize}
								\item[] $\rho^{(1)} \gets L(G(\zeta))$. ($*$) If $\|L(G(\zeta))_{n-m+1,\ldots,n}\|_0< k$, set $\zeta=\zeta^{*}$.  
								%
								\item[] $\rho^{(2)} \gets \E_R(\zeta)\|R$
							\end{itemize}
							\item Apply (constructive) switching lemma with pseudorandom restriction to get function $\vF'\equiv\vF|_\rho$ ($n$-bit output).
							
							If $\vF$ is not in $\mathcal{G}$, halt and output some constant function.
							
							\item For $j\in [i]$, $h_j'\equiv h_j$.
							
							\item $h'_{i+1}(y'_1,\ldots,y'_i):= h(y'_1,\ldots,y'_i)_{[m]}$.
							
							\item $h'(y'_1,\ldots,y'_{i+1}):= h(y'_1,\ldots,y'_i)_{\extInd(L(G(\D_R(y'_{i+1}))))}$.
							
							\item Finally, output $\tau'=(\vF',h'_1,\ldots,h'_{i+1},h')$.
						\end{myenum}
						\medskip
					\end{minipage}}}
					\caption{Simulator, $\Sim$, for $(\E^\star,\D^\star)$}
					\label{fig:simulator}
				\end{figure}

The Lemma follows immediately from Claims~\ref{cl:correctness},~\ref{cl:simulation}, and~\ref{cl:error} below.
\begin{claim} \label{cl:correctness}
	For any $x\in\zo^{k}$, $\Pr[\D^*(\E^*(x))=x]=1$.
\end{claim}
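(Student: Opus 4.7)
The plan is to trace a fixed message $x\in\{0,1\}^k$ through $\E^\star$ and then $\D^\star$ and show that the two cancel exactly. No probabilistic argument is required: the proof reduces to unwinding definitions together with the perfect correctness of the RPE (Lemma~\ref{lem:rpe}) and the defining property of $\zeta^*$.

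First I would describe the encoded string explicitly. Let $\zeta$ denote the seed actually in force at the end of step~2 of $\E^\star$, i.e.\ the uniformly sampled seed or, if the sparsity check $(*)$ fails, the hard-coded $\zeta^*$. By construction of the check and of $\zeta^*$, the resulting $\zeta$ always satisfies the property that $\extInd(L(G(\zeta)))$ returns $k$ indices all lying outside $[1,m]$. Writing $\rho^{(1)}=L(G(\zeta))$ and $\rho^{(2)}=\E_R(\zeta)\,\|\,U$, the encoder's output is $c=\embed(x,\rho)$; by the definition of $\embed$ this means $c_{1:m}=\E_R(\zeta)$, while at the positions listed by $\extInd(\rho^{(1)})$ the bits of $x$ are copied verbatim.

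Plugging this $c$ into $\D^\star$, perfect correctness of the RPE gives $\tilde\zeta=\D_R(c_{1:m})=\D_R(\E_R(\zeta))=\zeta$, so the decoder recomputes the identical restriction $L(G(\tilde\zeta))=\rho^{(1)}$. The decoder's own sparsity test therefore passes, and $\extract(c,\rho^{(1)})=c_{\extInd(\rho^{(1)})}=x$ directly from the construction of $c$. Hence $\D^\star(\E^\star(x))=x$ with probability one. The only point worth flagging is that the fallback $\zeta\leftarrow\zeta^*$ in step~$(*)$ must be applied consistently to \emph{both} $\rho^{(1)}$ and the RPE payload inside $\rho^{(2)}$, so that the seed recovered by $\D_R$ matches the seed determining $\rho^{(1)}$; once that bookkeeping is in place the argument is purely syntactic, and no real obstacle arises.
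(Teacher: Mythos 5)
Your proposal is correct and follows essentially the same route as the paper's proof: use the sparsity check $(*)$ to ensure the embedding positions all lie past the first $m$ coordinates so that $c_{1:m}=\E_R(\zeta)$, invoke perfect RPE decoding to get $\tilde\zeta=\zeta$, and conclude that $\extract$ reads off exactly the embedded bits of $x$. Your added remark about applying the fallback $\zeta\leftarrow\zeta^*$ consistently to both $\rho^{(1)}$ and the encoded payload in $\rho^{(2)}$ is a fair bookkeeping point that matches the construction's intent.
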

\begin{proof}
	The second step of $\E^*$ guarantees that $\extInd(L(G(\zeta)))_1>m$ and $\|L(G(\zeta)\|_0\geq k$. Therefore, $\E_R(\zeta)$ is located in the first $m$ bits of $c$ and the entire $x$ is embedded inside the remaining $n-m$ bits of $c$ according to $L(G(\zeta))$. By the decoding property of RPE from lemma~\ref{lem:rpe}, $\Pr[\D_R(c,\dots,c_m)=\zeta]=1$, namely, $\Pr[\tilde{\zeta}=\zeta]=1$. Conditioned on $\tilde{\zeta}=\zeta$, because $\|L(G(\zeta)\|_0\geq k$, $\D^{*}(\E^*(x))=\extract(c,L(G(\zeta)))=x$ holds. The desired conclusion follows.
\end{proof}

\begin{claim} \label{cl:simulation}
	Given any $\tau=\eval(\vF,h_1,\dots,h_i,h)\in\leaky^{i,m,N}[\mathcal{F}]$, there is a distribution $S_{\tau}$ over $\tau'\in\leaky^{i+1,m,N}[\mathcal{G}]$, such that for any $x\in\zo^k$, 
	$\D^\star\circ\tau\circ\E^\star(x)$ is $\delta$-close to $\tau'(x)$ where $\tau'\leftarrow\Sim_{\tau}$ and $\delta\leq \Pr[\text{$\vF\circ\E^*$ is not in $\mathcal{G}$}]$.
\end{claim}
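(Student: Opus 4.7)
The plan is to take $\Sim_\tau$ to be exactly the randomized procedure described in Figure~\ref{fig:simulator}, and to establish the claim by a coupling argument: I couple the randomness of the real experiment $\D^\star\circ\tau\circ\E^\star(x)$ with the randomness of the simulator by reusing the same pseudorandom restriction $\rho$ (seed $\zeta$ plus pad $U$) in both. Under this coupling, I will argue that the two outputs are \emph{identical} as long as the ``good event'' $E$ holds that $F_j\uhr\rho\in\mathcal{G}$ for every $j\in[N]$; since $\Sim_\tau$ explicitly aborts to a constant on $E^c$, this gives $\Delta(\D^\star\circ\tau\circ\E^\star(x),\tau'(x))\le \Pr[E^c]\le \Pr[\vF\circ\E^\star\notin\mathcal{G}]$, which is exactly the promised bound on $\delta$.

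The core identity I need, and which I would verify first, is the restriction identity $F_j(\embed(x,\rho))=(F_j\uhr\rho)(x)$ for every committed function $F_j$, which is immediate from the definitions of $\embed$ and of restriction. From this it follows that the leakage rounds in the two experiments match bit-for-bit: for $j\in[i]$, in the real experiment the adversary sees $\vF(\embed(x,\rho))_{S_j}$, and in the simulator (whose committed tuple is $\vF' = \vF\uhr\rho$ and whose $j$th selection function is $h_j'\equiv h_j$) the adversary sees $\vF\uhr\rho(x)_{S_j}$. Because the $S_j$ are determined inductively from previous leakage via the same $h_j$, the entire leakage transcripts agree. Consequently the final selection $T=h(\vF(\embed(x,\rho))_{S_1},\dots,\vF(\embed(x,\rho))_{S_i})\in[N]^n$ that arises from $\tau$ agrees with the set computed inside $\Sim_\tau$ from the same leaked bits.

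The next step is to match the decoder of the outer scheme with the extra leakage round $h_{i+1}'$ and the final selection $h'$ of the simulator. In the real experiment, $\tilde c=\vF(\embed(x,\rho))_{T}$, so $\tilde c_{1:m}=\vF\uhr\rho(x)_{T_{1:m}}$; the outer decoder then forms $\tilde\zeta=\D_R(\tilde c_{1:m})$ and outputs $\tilde c_{\extInd(L(G(\tilde\zeta)))}=\vF\uhr\rho(x)_{T_{\extInd(L(G(\tilde\zeta)))}}$. In the simulator, $h_{i+1}'$ is defined precisely to select $T_{[m]}$, so the $(i+1)$th leakage vector $y_{i+1}'$ is exactly this $\tilde c_{1:m}$; then $h'$ selects $T_{\extInd(L(G(\D_R(y_{i+1}'))))}$, so the final tampering output $\vF\uhr\rho(x)$ on this index tuple agrees with $\D^\star(\tilde c)$ pointwise. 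Thus on the event $E$, the two distributions are identical as claimed; and on $E$ the simulator indeed produces a member of $\leaky^{i+1,m,N}[\mathcal{G}]$, since each $F_j\uhr\rho\in\mathcal{G}$ and only one extra round of $m$ bits of leakage was introduced.

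The only mildly subtle step, and what I would flag as the main bookkeeping obstacle, is tracking the edge cases introduced by the $(\ast)$ fallback in $\E^\star$ (the rare event that $\|L(G(\zeta))_{n-m+1:n}\|_0<k$) and ensuring that the same fallback is triggered identically in the simulator; this is handled simply by using the identical sampling step in both, so the fallback is part of the coupling rather than a new source of error. Given this, the TV distance is controlled entirely by $\Pr[E^c]$, which by hypothesis and a union bound over the $N$ committed functions is at most $\Pr[\vF\circ\E^\star\notin\mathcal{G}]$, completing the claim.
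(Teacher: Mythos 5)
Your proposal is correct and follows essentially the same route as the paper: take the simulator of Figure~\ref{fig:simulator}, couple the real and simulated experiments by reusing the same restriction randomness $(\zeta,U,r)$, show by induction on the leakage rounds that $y_j'=y_j$ and that the extra round $h_{i+1}'$ together with $h'$ reproduces exactly the indices the decoder extracts, and bound the statistical distance by the probability that $\vF\circ\E^\star$ fails to land in $\mathcal{G}$. The only cosmetic difference is your closing remark about a union bound over the $N$ committed functions, which belongs to the subsequent error estimate (Claim~\ref{cl:error}) rather than to this claim's bound, but it does not affect the argument.
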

\begin{proof}  Recall that a function $\tau$ in $\leaky^{i,m,N}[\mathcal{F}]$ can be described via $(\vF,h_1,\dots,h_i,h)$ where $\vF$ is a function in $\mathcal{F}$ from $\zo^k$ to $\zo^N$ and for every $x\in\zo^k$, $h$ takes $\vF(x)_{S_1},\dots,\vF(x)_{S_i}$ (where $S_j$ are sets adaptively chosen by $h_j$ for $j\in[i]$) as input and outputs a set $T$ of size $k$. And the evaluation of $\tau$ on $x$ is $\vF(x)_T$.  
	
Let	$\Sim_{\tau}$ be defined in Figure~\ref{fig:simulator}. We call a chocie of randomness $\zeta,U,r$ ``good for $\vF=(F_1,\cdots,F_N)$'' (where $r$ is the randomness for $\E_R$) if $\vF\circ\E^\star(\cdot;\zeta,U,r)$ is in $\mathcal{G}$. We will show for any good $\zeta,U,r$ for $\vF$, $\D^\star\circ\tau\circ\E^\star(\cdot;\zeta,U,r)\equiv \tau'(\cdot)$, where $\tau'=S_{\tau}(\zeta,U,r)$. 

For good $\zeta,U,r$, note that (1) $\vF'\equiv\vF|_\rho$ and (2) $\rho$ was used in both $\E^\star$ and $S_{\tau}$. It follows that for all $x$, $\vF'(x)=\vF|_\rho(x)=\vF(\E^{\star}(x;\zeta,R,r))$. Because $h_j'\equiv h_j$ for $j\in[i]$, it follows by induction that $y'_j=y_j$ (the output of each $h'_j$ and $h_j$ respectively, $j\in[i]$). Therefore, $h(y_1,\cdots,y_i)=h(y'_1,\cdots,y'_i)$.
	It follows that $\tilde{c}_{[m]}=y'_{i+1}$ and $L(G(\D_R(y'_{i+1})))=L(G(\tilde{\zeta}))$. 
	Consequently, $h'(y'_1,\cdots,y'_{i+1})$ outputs that exact same indices that the decoding algorithm, $\D^\star$, will extract its output from. 
	Thus, $\tau'(x)=\D^\star\circ\tau\circ\E^\star(x;\zeta,R,r)$ for any $x$.  
	
	Because $\Sim$ and $\E^\star$ sample their randomness identically, the distributions are identical, conditioned on the randomness being ``good.'' Hence $\delta$ is at most the probability that $\zeta,U,r$ are not ``good for \vF'', i.e., $\Pr[\text{$\vF\circ\E^*$ is not in $\mathcal{G}$}]$.
	
\end{proof}

\begin{claim} 	\label{cl:error} 
 $\Pr[\text{$\vF\circ\E^*$ is not in $\mathcal{G}$}]\leq N\varepsilon + \exp(-\sigma/2\log(1/p))$.
\end{claim}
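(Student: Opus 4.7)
The plan is to decompose $\Pr[\vF\circ\E^\star \notin \mathcal{G}]$ into two contributions: the probability that the rescue clause ($*$) in $\E^\star$ fires and overrides $\zeta$ with $\zeta^\star$, and the probability that $\vF$ fails to collapse into $\mathcal{G}$ under a genuinely $\sigma$-wise independent restriction. Letting $B$ denote the event $\|L(G(\zeta))_{n-m+1,\ldots,n}\|_0 < k$, and letting $\rho'$ denote the restriction that would be produced \emph{ignoring} the override, I would first observe
\[
\Pr[\vF\circ\E^\star \notin \mathcal{G}] \;\le\; \Pr[B] \;+\; \Pr[\vF|_{\rho'} \notin \mathcal{G}].
\]

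For the second term, the key claim is that $\rho'$ is distributed according to a $\sigma$-wise independent distribution over $\zo^{n\log(1/p)}\times\zo^n$. Granting this, invoking the hypothesis of the lemma for each $F_j$ in turn and union bounding across the $N$ coordinates of $\vF$ yields $\Pr[\vF|_{\rho'} \notin \mathcal{G}] \le N\varepsilon$. To verify the $\sigma$-wise independence, I would consider any set $T$ of at most $\sigma$ coordinates spread across the three ``blocks'' $G(\zeta)$, $\E_R(\zeta)$, and the fresh uniform block $U$. The $G(\zeta)$-coordinates in $T$ are $\sigma$-wise uniform by Lemma~\ref{lem:gen}; the $U$-coordinates are trivially uniform and independent of the rest; and, since $m \ge \sigma/\csec$, the RPE secrecy property yields that $\E_R(\zeta)_T$ is uniform over the RPE randomness \emph{for every fixed $\zeta$}. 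This pointwise uniformity is exactly what decouples the RPE block from $\zeta$, so the three blocks combine into a jointly $\sigma$-wise uniform distribution.

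To bound $\Pr[B]$, I would invoke concentration of the star-count. Coordinate $i$ lies in $L(G(\zeta))$ precisely when $\log(1/p)$ prescribed bits of $G(\zeta)$ are all $1$; since $G$ is $\sigma$-wise independent in its \emph{bits}, the star-indicators across any collection of positions form a $\lfloor \sigma/\log(1/p)\rfloor$-wise independent family of Bernoulli$(p)$ variables. The parameter hypothesis $k \le (n-m)p/2$ ensures the expected number of stars in the relevant window satisfies $\mu \ge 2k$, and $4\sigma/\log(1/p)\le k$ verifies the side condition of the $\sigma$-wise Chernoff bound (Theorem~\ref{thm:pr-chernoff}) at deviation $\varepsilon = 1/2$. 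Together these give $\Pr[B] \le \exp(-\sigma/(2\log(1/p)))$, and summing the two bounds yields the claim.

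The main obstacle will be the joint $\sigma$-wise independence argument in the second paragraph: because $G(\zeta)$ and $\E_R(\zeta)$ are both determined by the same seed $\zeta$, one might naively worry that correlations between these blocks would destroy the pseudorandom switching lemma's applicability. The resolution rests on the \emph{pointwise} nature of RPE secrecy (uniformity of $\E_R(\zeta)_T$ for every individual value of $\zeta$, rather than merely in the marginal over $\zeta$), which is precisely what lets the $\sigma$-wise generator and the leakage-resilient encoding of its seed be concatenated into a single $\sigma$-wise independent object.
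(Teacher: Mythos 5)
Your proposal is correct and follows essentially the same route as the paper's proof: the same decomposition into the override event plus the failure probability under the un-overridden restriction, the same argument that pointwise (per-$\zeta$) uniformity of the RPE block decouples it from $G(\zeta)$ to give joint $\sigma$-wise independence, the same union bound over the $N$ coordinates, and the same $\sigma$-wise Chernoff bound on the star count. No gaps.
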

\begin{proof}
	We first show $\cD=G(\zeta)\|\E_R(\zeta)\|U$ is $\sigma$-wise independent when $\zeta\gets\{0,1\}^s$ and $U\gets\{0,1\}^{n-m}$.  As $U$ is uniform and independent of the rest, it suffices to simply consider $Z=G(\zeta)\|\E_R(\zeta)$.
	Fix some $S\subseteq [n\log(1/p)+m]$ such that $|S|\leq \sigma$.
	By the secrecy property of the RPE and  $m\cdot\csec\geq \sigma$, conditioned on any fixed $\zeta$, $Z_{S\cap\{n\log(1/p)+1,\ldots,n\log(1/p)+m\}}$is distributed uniformly.
	Therefore, $\zeta$ is independent of $Z_{S\cap\{n\log(1/p)+1,\ldots,n\log(1/p)+m\}}$, so $G$ guarantees that $Z_{S\cap\{1,\ldots,n\log(1/p)\}}$ is  independently of ${S\cap\{n\log(1/p)+1,\ldots,n\log(1/p)+m\}}$ and also distributed uniformly.
	Therefore, $Z_S$ is distributed uniformly.
	
	Note that $\rho$ in $\E^{*}$ is distributed identically to $\cR(\cD)$, except when $\zeta^{*}$ is used.  Hence 
	\[\Pr[\text{$\vF\circ\E^*$ is not in $\mathcal{G}$}] \leq \Pr_{\rho\gets\cR(\cD)}\left[\vF_{\rho} \text{ is not in } \mathcal{G}\right]+\Pr[\|L(G(\zeta))_{n-m+1,\ldots,n}\|_0<k].\]
	By our assumption and a union bound over the $N$ boolean functions, $\vF_{\rho}\notin\mathcal{G}$ happens with probability at most $N\varepsilon $ when $\rho\gets\cR(\cD)$. Observe that $L(G(\zeta))_{n-m+1,\ldots,n}$ is a $\frac{\sigma}{\log(1/p)}$-wise independent distribution over $\zo^{n-m}$ and each coordinate is $1$  with probability $p$. Let $\mu=(n-m)p$ denote the expected number of $1$'s in $L(G(\zeta))_{n-m+1,\ldots,n}$. By linearity of expectation $\mu=(n-m)p$. For $k\leq \mu/2$ and $\frac{\sigma}{\log(1/p)}\leq\mu/8$, we can use the  concentration bound from Theorem~\ref{thm:pr-chernoff} to conclude that $\|L(G(\zeta))_{n-m+1,\ldots,n}\|_0<k$ happens with probability at most $\exp(-\frac{\sigma}{2\log(1/p)})$.  The desired conclusion follows.		
\end{proof}

\iffull
\subsubsection{Proof of Lemma~\ref{lem:ac0}}

\else
\subsection{Proof of Lemma~\ref{lem:ac0}}
\fi
To prove Lemma~\ref{lem:ac0}, we instantiate Lemma~\ref{lemma:psl-nmr} using the pseudorandom switching lemma of Theorem~\ref{thm:TX-SL} (in fact, Corollary~\ref{lem:psl}) and iteratively reduce $\AC_d(S)$ to leaky local functions. Each application of the reduction, after the first, will allow us to trade a level of depth in the circuit for an additional round of leakage until we are left with a depth-2 circuit. The final application of the reduction will allow us to convert this circuit to local functions at the expense of a final round of leakage.

Let $t:=\log(\ell)$ and let $\sigma:=\poly(t, \log(2^tS),\log(1/\delta),\log(1/p))$ as in Corollary~\ref{lem:psl} so that any depth-2 circuits with bottom fan-in $t$ become depth $t$ decision trees with probability at least $1-(2^{2t+1}(5pt)^t+\delta)$ under pseudorandom restrictions drawn from $\sigma$-wise independent distribution. 

We use $\AC_d(S)\circ\DT(t)$ to denote alternating (unbounded fan-in) circuits of depth $d$, size $S$ that take the output of depth $t$ decision trees as input.
(Note may contain up to $S$ decision trees.)
Similarly it is helpful to decompose an alternating circuit (from $w$-$\AC_d$) into a base layer of CNFs or DNFs and the rest of the circuit, $\AC_{d-2}(S)\circ w$-$\AC_2(S')$. (Again, the base may contain up to $S$ CNFs/DNFs of size $S'$.) 
 
\begin{claim}\label{cl:width}
$(\AC_d(S)\implies\leaky^{1,m,n}[\AC_{d-2}(S)\circ t$-$\AC_2(2^tS)], \varepsilon)$.
\end{claim}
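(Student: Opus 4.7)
The plan is to apply Lemma~\ref{lemma:psl-nmr} with the choice $\mathcal{F}=\AC_d(S)$ and $\mathcal{G}=\AC_{d-2}(S)\circ t\text{-}\AC_2(2^tS)$, identifying an $n$-output $\AC_d(S)$ tampering function with an element of $\leaky^{0,m,n}[\AC_d(S)]$. Under this identification, Lemma~\ref{lemma:psl-nmr} reduces the claim to showing that for each single-output $F\in\AC_d(S)$, the restricted function $F\uhr\rho$ lies in $\mathcal{G}$ except with small probability over a $\sigma$-wise independent pseudorandom restriction $\rho\leftarrow\cR(\cD)$.

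First I would decompose $F$ by peeling off its top $d-2$ layers as an $\AC_{d-2}(S)$ subcircuit fed by $M\le S$ depth-$2$ subcircuits $F_1,\dots,F_M$ (each a CNF or DNF) at the bottom two layers, one per input wire of the upper circuit. Then I would invoke Corollary~\ref{lem:psl} with bottom-fan-in parameter $w=t$ applied to the family $\{F_j\}$: with probability at least $1-S\cdot(2^{2t+1}(5pt)^{t}+\delta)$ over $\rho$, every $F_j\uhr\rho$ is computable by a depth-$t$ decision tree.

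Second I would convert each resulting depth-$t$ decision tree into a $t$-DNF (or $t$-CNF) of size at most $2^t$ via the standard path-enumeration trick---one term per accepting root-to-leaf path---placing each $F_j\uhr\rho$ in $t\text{-}\AC_2(2^t)$. Summing over the $\le S$ bottom subcircuits, their total contribution lies in $t\text{-}\AC_2(2^tS)$, so feeding them into the unchanged top $\AC_{d-2}(S)$ subcircuit gives $F\uhr\rho\in\mathcal{G}$. Plugging this switching bound into Lemma~\ref{lemma:psl-nmr} (with $i=0$ and $N=n$) delivers the claimed non-malleable reduction with error $n\cdot S\cdot(2^{2t+1}(5pt)^{t}+\delta)+\exp(-\sigma/(2\log(1/p)))$.

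The main obstacle is justifying the invocation of Corollary~\ref{lem:psl} with $w=t$ against the in-principle unbounded bottom fan-in of $\AC_d(S)$: one must argue, either via the $\sigma$-wise independent concentration bound of Theorem~\ref{thm:pr-chernoff} showing that any wide bottom gate has surviving fan-in at most $t$ under $\rho$ (so that $F_j\uhr\rho$ is effectively a depth-$2$ circuit of bottom fan-in $t$), or via a syntactic preprocessing that splits wide bottom gates before restriction, that the small-bottom-fan-in hypothesis of the switching lemma is indeed met by the restricted subcircuits; the $2^t$ blow-up factor in the size bound on $\mathcal{G}$ then arises purely from the decision-tree-to-DNF conversion above.
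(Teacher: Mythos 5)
Your overall strategy---instantiate Lemma~\ref{lemma:psl-nmr} with $\mathcal{G}=\AC_{d-2}(S)\circ t$-$\AC_2(2^tS)$, switch the bottom depth-$2$ subcircuits to depth-$t$ decision trees via Corollary~\ref{lem:psl}, and then convert each $\DT(t)$ back into a width-$t$ DNF/CNF of size $2^t$---is the right skeleton, and your error accounting and your identification of the source of the $2^tS$ size bound are correct. But the step you yourself flag as ``the main obstacle'' is a genuine gap, and neither of the two fixes you sketch closes it. The concentration route fails outright for the relevant parameters: a bottom gate of fan-in $f$ has expected surviving fan-in $pf$ under the restriction, and since $f$ can be as large as $S=\poly(n)$ while $p$ is only polylogarithmically small, Theorem~\ref{thm:pr-chernoff} shows the surviving fan-in concentrates \emph{near} $pf\gg t$, not below $t$. (The classical alternative---arguing that a wide AND/OR gate is overwhelmingly likely to be \emph{fixed} by the restriction---is a different argument with its own failure probability that would have to be folded into $\varepsilon$, and you have not carried it out.) The ``syntactic preprocessing'' you mention is the right instinct but is left unspecified.

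The paper closes the gap with a one-line relayering trick: view $F\in\AC_d(S)$ as a circuit in $1$-$\AC_{d+1}(S)$ by inserting a dummy layer of fan-in-$1$ gates at the bottom. Now the bottom two layers are depth-$2$ circuits of size at most $S$ with bottom fan-in $w=1$, so Corollary~\ref{lem:psl} applies with $w=1$ and gives failure probability $S\left(2^{t+2}(5p)^{t}+\delta\right)$ per output bit. Note the consequence for the depth bookkeeping: this first application does not peel off the bottom two layers of the original circuit as in your decomposition---it only absorbs the single bottom layer into the decision trees, yielding $\AC_{d-1}(S)\circ\DT(t)$; the passage to $\AC_{d-2}(S)\circ t$-$\AC_2(2^tS)$ then comes from choosing the DNF/CNF polarity of each $\DT(t)$ so that its top gate merges with the matching gate in the layer above. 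In other words, Claim~\ref{cl:width} is a width-reduction step at no depth savings; the actual depth reduction only begins in Claim~\ref{cl:depth}, once the bottom fan-in is genuinely bounded by $t$ and the switching lemma can be invoked with $w=t$ as you proposed.
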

\begin{proof}
Let $F\in\AC_d(S)$ be a boolean function. Note that Theorem~\ref{thm:TX-SL} and Corollary~\ref{lem:psl} are only useful for bounded width DNF and CNF. So, we view $F$ as having an additional layer of fan-in 1 AND/OR gates, namely, as a function in $1$-$\AC_{d+1}(S)$. Because there are at most $S$ DNFs (or CNFs) of size $S$ at the bottom layers of $F$, by Corollary~\ref{lem:psl}, the probability that $F$ is not in $\AC_{d-1}(S)\circ\DT(t)$ is at most $S\left(2^{t+2}(5p)^{t}+ \delta\right)$ under the pseudorandom switching lemma with parameters $p,\delta,\sigma$. So by Corollary~\ref{lem:psl}, $(\E^\star,\D^\star)$ reduces $\AC_d(S)$ to $\leaky^{1,m,n}[\AC_{d-1}(S)\circ\DT(t)]$ with error $n(S\left(2^{t+2}(5p)^{t}+ \delta\right))+\exp(-\Omega(\frac{\sigma}{\log(1/p)}))\leq \varepsilon$.

By the fact that $\DT(t)$ can be computed either by width-$t$ DNFs or width-$t$ CNFs of size at most $2^t$, any circuit in $\AC_{d-1}(S)\circ\DT(t)$ is equivalent to a circuit in $\AC_{d-2}(S)\circ t$-$\AC_2(2^tS)$, in other words, a depth $d$ circuit with at most $S$ width-$t$ size-$S2^t$ DNFs or CNFs at the bottom.  Hence, $\AC_{d-1}(S)\circ\DT(t)$ is a subclass of $\AC_{d-2}(S)\circ t$-$\AC_2(2^tS)$ and the claim follows.
\end{proof}

\begin{claim} \label{cl:depth}
$(\leaky^{i,m,n}[\AC_{d-i-1}(S)\circ t$-$\AC_2(2^tS)]\implies\leaky^{i+1,m,n}[\AC_{d-i-2}(S)\circ t$-$\AC_2(2^tS)], \varepsilon)$. 
\end{claim}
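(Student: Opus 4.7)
The plan is to mirror the argument of Claim~\ref{cl:width}: combine the pseudorandom switching lemma (Corollary~\ref{lem:psl}) with Lemma~\ref{lemma:psl-nmr}, but now applied one depth-level inward. Concretely, I first view any $F \in \AC_{d-i-1}(S)\circ t\text{-}\AC_2(2^tS)$ (with $t=\log\ell$) as a depth-$(d-i-1)$ alternating circuit sitting on top of at most $S$ width-$t$, depth-$2$ base circuits, each of size at most $2^t S$. Taking $\sigma$ as in the proof of Lemma~\ref{lem:ac0}---i.e., $\sigma = \poly(t,\log(2^tS),\log(1/\delta),\log(1/p))$---Corollary~\ref{lem:psl} with $w=t$ implies that, under a pseudorandom restriction drawn from any $\sigma$-wise independent distribution over $\zo^{n\log(1/p)}\times \zo^n$, each base circuit collapses to a depth-$t$ decision tree except with probability at most $2^{2t+1}(5pt)^t+\delta$.

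Next, a union bound over the $\le S$ base circuits per output bit and the $n$ output bits of the leaky family shows that \emph{all} base circuits across all outputs switch to depth-$t$ decision trees simultaneously, except with probability at most $nS\bigl(2^{2t+1}(5pt)^t+\delta\bigr)$. On this event, each depth-$t$ decision tree can be rewritten as either a width-$t$ DNF or a width-$t$ CNF of size at most $2^t$ (one term/clause per leaf of the appropriate label). Choosing the DNF or CNF form so that the output gate of each rewritten tree matches the type of the gate immediately above it in the depth-$(d-i-1)$ top circuit, the new bottom layer merges upward and the alternating depth drops by one. Tracking parameters, the restricted function then lies in $\AC_{d-i-2}(S)\circ t\text{-}\AC_2(2^tS)$: the top part loses a level and retains size $S$, while the new bottom layer has size at most $S\cdot 2^t = 2^t S$ since each of the $\le S$ trees contributes at most $2^t$ terms/clauses.

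Finally, I apply Lemma~\ref{lemma:psl-nmr} with $\mathcal{F}=\AC_{d-i-1}(S)\circ t\text{-}\AC_2(2^tS)$, $\mathcal{G}=\AC_{d-i-2}(S)\circ t\text{-}\AC_2(2^tS)$, and $N=n$. The switching analysis above bounds the quantity $\Pr_{\rho \gets \cR(\cD)}[F_\rho \notin \mathcal{G}]$ required by that lemma, and the resulting non-malleable reduction has error
\[ \varepsilon \;\le\; nS\bigl(2^{2t+1}(5pt)^t+\delta\bigr) \;+\; \exp\!\bigl(-\tfrac{\sigma}{2\log(1/p)}\bigr), \]
matching the $\varepsilon$ in the claim statement (after substituting $t=\log\ell$). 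The only point at which the argument differs substantively from Claim~\ref{cl:width} is the rewriting step: one must verify that each collapsed decision tree can always be oriented (as a DNF or as a CNF) so that it genuinely merges with the gate above rather than adding a new level. This is the standard two-form representation of a decision tree and is the main---though routine---piece of bookkeeping; given it, all other parameters flow through exactly as in the proof of Claim~\ref{cl:width}.
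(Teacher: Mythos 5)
Your proposal is correct and follows essentially the same route as the paper: apply the pseudorandom switching lemma (Corollary~\ref{lem:psl}) to the $\le S$ width-$t$ base circuits of each of the $n$ committed functions, feed the resulting failure probability into Lemma~\ref{lemma:psl-nmr}, and use the two-form (DNF/CNF) representation of depth-$t$ decision trees to merge the collapsed bottom layer into the gate above, landing in $\AC_{d-i-2}(S)\circ t$-$\AC_2(2^tS)$. The only cosmetic difference is that the paper first reduces to $\leaky^{i+1,m,n}[\AC_{d-i-2}(S)\circ\DT(t)]$ and then invokes the subclass containment, whereas you fold the containment into the choice of $\mathcal{G}$ before applying Lemma~\ref{lemma:psl-nmr}; the two are equivalent.
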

\begin{proof}
For a boolean function $F\in\AC_{d-i-1}(S)\circ  t$-$\AC_2(2^tS)$, because there are at most $S$ DNFs (or CNFs) of size $2^tS$ at the bottom layers of $F$,  Corollary~\ref{lem:psl} shows $F$ is not in $\AC_{d-i-1}(S)\circ\DT(t)$ with probability at most $S\left(2^{2t+2}(5pt)^{t}+ \delta\right)$ under a pseudorandom switching lemma with parameters $p,\delta,\sigma$. So by Lemma~\ref{lemma:psl-nmr}, $(\E^\star,\D^\star)$ reduces $(\leaky^{i,m,n}[\AC_{d-i-1}(S)\circ t$-$\AC_2(2^tS)]$ to $\leaky^{i+1,m,n}[\AC_{d-i-2}(S)\circ \DT(t)]$ with error at most $\varepsilon$. Similarly as the previous proof, because $\AC_{d-i-1}(S)\circ \DT(t)$ is a subclass of $\AC_{d-i-2}(S)\circ t$-$\AC_2(2^tS)$, the claim follows.
\end{proof}
\begin{claim}  \label{cl:base}
	$(\leaky^{d-1,m,n}[t$-$\AC_{2}(2^tS)]\implies\leaky^{d,m,n}[\Local^{2^t}],\varepsilon)$
\end{claim}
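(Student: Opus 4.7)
The plan is to apply Lemma~\ref{lemma:psl-nmr} one last time, with Corollary~\ref{lem:psl} supplying the underlying pseudorandom switching guarantee. Concretely, I would take $\mathcal{F} = t\text{-}\AC_2(2^tS)$ and $\mathcal{G} = \Local^{2^t}$. The key structural observation is that $\DT(t) \subseteq \Local^{2^t}$: any depth-$t$ decision tree queries at most $2^t$ distinct input variables (its internal nodes labelling the at-most-$2^t$ root-to-leaf paths), so its output depends on at most $2^t$ bits. Hence whenever the switching lemma forces a depth-$2$ restricted circuit into $\DT(t)$, it automatically lands in $\Local^{2^t}$, which is what $\mathcal{G}$ demands.

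The quantitative step is then a direct invocation of Corollary~\ref{lem:psl}: for each size-$2^tS$, bottom fan-in $t$ depth-$2$ circuit $F_j$ comprising one of the $n$ outputs, under a $\sigma$-wise independent pseudorandom restriction with the same $\sigma=\poly(t,\log(2^tS),\log(1/\delta),\log(1/p))$ chosen at the outset of the proof of Lemma~\ref{lem:ac0}, the probability that $F_j\uhr\vrho \notin \DT(t)$ is at most $2^{2t+1}(5pt)^t + \delta$. Taking a union bound over the $n$ outputs gives failure probability at most $n\bigl(2^{2t+1}(5pt)^t + \delta\bigr)$ that $\vF\uhr\vrho \notin \mathcal{G}$. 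Plugging this bound into Lemma~\ref{lemma:psl-nmr} (with $i=d-1$ so that the output is an $(i+1) = d$-round leaky family) yields total error at most $n\bigl(2^{2t+1}(5pt)^t + \delta\bigr) + \exp(-\sigma/(2\log(1/p))) \leq \varepsilon$, exactly matching the statement.

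The main obstacle, such as it is, is purely one of parameter bookkeeping: I need to verify that the single $\sigma$ fixed at the start of the proof of Lemma~\ref{lem:ac0} (which was chosen to handle width-$t$ depth-$2$ subformulas of size up to $2^tS$, i.e., precisely the parameter regime here) simultaneously satisfies the pseudorandomness requirement invoked in Claims~\ref{cl:width} and~\ref{cl:depth}. This is immediate, since in all three claims the circuits being switched have bottom fan-in $t$ and size at most $2^tS$. The hypothesis $(4\sigma/\log(1/p)) \le k \le (n-m)p/2$ required by Lemma~\ref{lemma:psl-nmr} is absorbed by the assumption $2m \le k \le n(p/4)^d$ of Lemma~\ref{lem:ac0} (indeed, this is precisely the composition-friendly range, since each round shrinks the effective length by a factor of at most $p/4$ and there are $d$ rounds).

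Putting everything together, composing Claim~\ref{cl:width} (one application, $\AC_d(S) \Rightarrow \leaky^{1,m,n}[\AC_{d-2}(S)\circ t\text{-}\AC_2(2^tS)]$), Claim~\ref{cl:depth} iterated $d-3$ times (peeling off one depth level per round), and the present claim (collapsing the final depth-$2$ base layer to $\Local^{2^t}$) via the composition theorem (Theorem~\ref{thm:composition}) yields $\AC_d(S) \Rightarrow \leaky^{d,m,n}[\Local^{2^t}]$ with total error $d\varepsilon$, which is exactly Lemma~\ref{lem:ac0}.
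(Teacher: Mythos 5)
Your proof of Claim~\ref{cl:base} is correct and matches the paper's argument exactly: one final application of Lemma~\ref{lemma:psl-nmr} instantiated with Corollary~\ref{lem:psl}, together with the inclusion $\DT(t)\subseteq\Local^{2^t}$ (the paper first lands in $\leaky^{d,m,n}[\DT(t)]$ and then applies the inclusion, while you fold it into the choice of $\mathcal{G}=\Local^{2^t}$, which is equivalent). The only slip is in your concluding composition remark, which is outside the claim itself: Claim~\ref{cl:depth} must be iterated $d-2$ times (for $i=1,\dots,d-2$, terminating at $\AC_0(S)\circ t$-$\AC_2(2^tS)=t$-$\AC_2(2^tS)$), not $d-3$, so that the total is $d$ applications and the accumulated error is $d\varepsilon$.
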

\begin{proof}
Finally, for a boolean function $F\in t\text{-}\AC_{2}(2^tS)$, Corollary~\ref{lem:psl} shows $F$ is not in $\DT(t)$ with probability at most $S\left(2^{2t+2}(5pt)^{t}+ \delta\right)$. So by Lemma~\ref{lemma:psl-nmr}, $(\E^\star,\D^\star)$ reduces $\leaky^{d-1,m,n}[t$-$\AC_{2}(2^tS)]$ to $\leaky^{d,m,n}[\DT(t)]$ with error at most $\varepsilon$. The desired conclusion follows from the fact that $\DT(t)$ is a subclass of $\Local^{2^t}$. 
\end{proof}
By applying Claim~\ref{cl:width} once, then Claim~\ref{cl:depth} $(d-2)$ times and Claim~\ref{cl:base} once, $\AC_d(S)$ reduces to $\leaky^{d,m,n}[\Local^{2^t}]$ with error at most $d\varepsilon$. Note that $m=O(\sigma \log{n})$ throughout, and during each application of above claims, given a codeword of length $n'\geq k\geq 2m$, Lemma~\ref{lemma:psl-nmr} holds for messages of length $(n'-m)p/2\geq n'(p/4)$. Therefore, the composed reduction works for any $2m\leq k\leq n(p/4)^d$.

\else The proofs of Lemma~\ref{lemma:psl-nmr} and Lemma~\ref{lem:ac0} can be found in Appendix~\ref{sec:ac0-local}.\fi

\subsection{NM-Reducing Leaky Local to Split State}
	Simple modifications to construction from the appendix of~\cite{BDKM16} yield a $(\leaky^{d,s,N}[\Local^\ell], \textsc{SS}_k, \negl(k))$-non-malleable reduction.

\begin{lemma}
	\label{lem:bdkm}
	There exists a constant $c\in(0,1)$, such that for any $m,q,\ell$ satisfying $mq\ell^3\leq cn$ there is a
	$(\leaky^{q,m,N}[\Local^{\ell}]\implies\textsc{SS}_{k},\exp(-\Omega(k/\log n)))$-non-malleable reduction with rate $\Omega(1/\ell^2)$.
\end{lemma}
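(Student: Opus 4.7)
The plan is to construct an encoding scheme that follows closely the deterministic-decoding construction from the appendix of~\cite{BDKM16}, with the key addition (flagged in Section~\ref{sec:techniques}) of prepending a leakage-resilient encoding of the embedding seed. Given a split-state message $(m_L,m_R)\in\{0,1\}^{2k}$, the encoding proceeds in three layers: (i) encode $m_L$ and $m_R$ using binary RPEs (Lemma~\ref{lem:rpe}) to obtain $c_L$ and $c_R$, with $|c_R|$ chosen substantially larger than $|c_L|$; (ii) draw a short seed $\zeta$ for a $\sigma$-wise independent generator $G$ (Lemma~\ref{lem:gen}), use $G(\zeta)$ to produce a pseudorandom restriction $\rho$, and embed $c_L$ into a much longer string of length $n_Z\gg|c_R|$ via $\embed(c_L,\rho)$; (iii) prepend an RPE encoding of $\zeta$ to this embedded string to form the left state, and take the right state to be $c_R$. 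Decoding first recovers $\zeta$ from its RPE, uses $G(\zeta)$ to $\extract$ $c_L$ from the embedding, and then RPE-decodes $m_L$ and $m_R$.

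The bulk of the argument is to construct, for every leaky local tampering $\tau=\eval(\vF,h_1,\ldots,h_q,h)\in\leaky^{q,m,N}[\Local^{\ell}]$, a distribution over split-state functions $(g_L,g_R)$ such that $\D\circ\tau\circ\E(m_L,m_R)\approx(g_L(m_L),g_R(m_R))$. The simulator exploits three structural facts. First, because tampering is $\ell$-local and the right state is much longer than the extended left state, the number of bits of $c_R$ feeding into the tampering of the left state is at most $\ell\cdot(|\text{RPE}(\zeta)|+n_Z)\ll\csec\cdot|c_R|$; by the secrecy property of the right RPE, these bits are uniformly distributed and independent of $m_R$, so $g_L$ may sample them itself without access to $m_R$. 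Second, symmetrically, a union bound over the $\ell\cdot n_R$ extended-left positions that can influence the tampering of $c_R$ shows that with overwhelming probability over $\zeta$, none of these positions coincides with the $|c_L|$ embedded coordinates; conditioned on this event, the bits the right tampering reads from the left state are uniform random bits of $\rho^{(2)}$ and can be sampled by $g_R$ without access to $m_L$. Third, each of the $q$ leakage rounds is simulated analogously: for each queried output bit the simulator reads the $\ell$ inputs on its own side and samples the at-most-$\ell$ inputs from the other side uniformly, appealing to RPE secrecy and the $\sigma$-wise independence of the embedding to argue statistical closeness.

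The key obstacle is controlling error while simultaneously respecting all budgets. The total deviation from the real experiment decomposes into (a) the probability that some right-tampering-influencing left position lands inside the embedding of $c_L$, which by the $\sigma$-wise independence of $G$ (with appropriately biased output) can be bounded by a Chernoff-type tail via Theorem~\ref{thm:pr-chernoff}; and (b) the event that the cumulative number of left (resp.~right) bits the simulator samples---across the $q$ adaptive leakage rounds, the $N$ tampering-function specifications, and the cross-side tampering dependencies---exceeds the $\csec$ fraction of either RPE's length, which would violate secrecy. Adding these budgets yields the constraint $mq\ell^3\le cn$: each of the $m$ bits per round at $q$ rounds costs $\ell$ locality, and the ratio of extended-left length to $|c_L|$ scales as $\ell^2$ due to the nested embedding and inflation of the right state. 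Once the error estimates go through, composing the resulting reduction with the split-state NMC of Theorem~\ref{thm:ss-nmc} via Theorem~\ref{thm:composition} yields the stated result; the $\Omega(1/\ell^2)$ rate follows from balancing $|c_L|$, $n_Z$, and $|c_R|$ so that the three layers fit together while absorbing the quadratic locality overhead.
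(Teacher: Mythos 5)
Your construction matches the paper's (Figure~\ref{fig:local}): RPE-encode both halves, make the right encoding longer than the left one, embed the left encoding pseudorandomly into a region longer than the right state, and prepend an RPE of the seed. The overall simulation strategy (sample the cross-side and leakage-relevant bits uniformly, then reconstruct consistent RPEs on each side) is also the paper's. However, two of your quantitative claims are wrong as stated, and the missing piece is exactly the delicate part of the argument. First, you bound the number of bits of $c_R$ that feed into the tampering of the left state by $\ell\cdot(|\mathrm{RPE}(\zeta)|+n_Z)$ where your $n_Z$ is the length of the embedding region, and you assert this is $\ll \csec|c_R|$. But by your own setup $n_Z\gg|c_R|$, so this inequality is false; the left state as a whole is \emph{longer} than the right state, and an $\ell$-locality bound over all of it is useless. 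The correct bound only charges the $|c_L|=\Theta(k)$ positions of the tampered embedding that will actually be \emph{extracted}, plus the positions of the tampered seed-RPE. Identifying those extraction positions requires first simulating the tampered seed $\tilde\zeta$ (by sampling the few bits that affect the seed's RPE, which is short enough relative to $|c_R|$), and only then computing which $|c_L|$ output coordinates of the left tampering matter. This two-stage step (steps 8--9 of the paper's simulator in Figure~\ref{fig:sim-local}) is absent from your proposal, and without it the right RPE's secrecy budget cannot be met.

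Second, you claim that with overwhelming probability over $\zeta$ \emph{none} of the $\le\ell\cdot|c_R|$ left-state positions influencing the tampered right state lands on an embedded coordinate of $c_L$. Under any parameter choice giving rate $\Omega(1/\ell^2)$, the embedding density is $p=\Theta(|c_L|/\tau)$ and the expected number of such collisions is $\Theta(\ell |c_R|\cdot|c_L|/\tau)=\Theta(|c_L|)$, so the no-collision event actually has exponentially \emph{small} probability; forcing it to hold would require $\tau\gg \ell|c_R|\cdot|c_L|$ and destroy the claimed rate. The paper instead tolerates up to $\csec|c_L|$ collisions (event $**$, controlled by the $\sigma$-wise Chernoff bound of Theorem~\ref{thm:pr-chernoff}) and handles the colliding positions using the secrecy and reconstruction properties of the left RPE, sampling them uniformly and then reconstructing a consistent codeword of $m_L$. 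You need this weaker event and the reconstruction step for the rate to survive. Finally, note the lemma asserts only the reduction to $\textsc{SS}_k$; composing with Theorem~\ref{thm:ss-nmc} belongs to the later assembly, not here.
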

Note that we do not actually require any restrictions on $N$. 
\iffull \bigskip\iffull
\else
\section{NM-Reducing Leaky Local to Split State}
\label{sec:bdkm+}	
\fi

	We construct an encoding scheme $(\E, \D)$, summarized in Figure \ref{fig:local}, adapted from the appendix of~\cite{BDKM16}.
	We then show that the pair $(\E, \D)$ is a $(\leaky^{d,s,N}[\Local^\ell], \textsc{SS}_k, \negl(k))$-non-malleable reduction.

\begin{figure}[tp]
  \fbox{\parbox{\textwidth}{
      \small\center
      \begin{minipage}{0.95\textwidth}
        
				Let $\G=\G_{p,\sigma}:\{0,1\}^{s(\sigma)}\to\{0,1\}^\tau$ be a $\sigma$-wise independent generator with bias $p=\frac{3\nL}{2\tau}$(see Lemma~\ref{lem:gen}, $s=s(\sigma)=\sigma\log(2\tau)=O(\sigma\log n)$),
				with inputs of length $s$ and outputs of length $\tau$.

				Let $(\E_L,\D_L)$, $(\E_Z,\D_Z)$, $(\E_R,\D_R)$ be RPEs with parameters 				$(k,\nL,\csec,\cerr)$, $(s,\nZ,\csec,\cerr)$, and $(k,\nR,\csec,\cerr)$ respectively.
				
				Assume $\ell>1/\csec$. 
				Define feasible parameters according to the following:\\
				\hspace*{2mm} $\nZ \geq \max\{mq\ell/\csec,s(\sigma)\crate\}$
					(Take $\nZ = \theta(mq\ell+s(\sigma))$),\\
				\hspace*{2mm} $\nL \geq k\crate$
					(Take $\nL=\theta(k)$),\\
				\hspace*{2mm} $\nR \geq \frac{\ell}{\csec}(\nL+\nZ+mq)$
					(Take $\nR =\theta(\ell(k+mq\ell+s(\sigma)))$),\\
				\hspace*{2mm} $\tau\geq \frac{9\ell}{4\csec}(\nR+\nZ+mq)$
					(Take $=\theta(\ell^2(k+mq\ell+s(\sigma)))$),\\
				\hspace*{2mm} $n:=\nZ+\tau+\nR$
					($n = \theta(\ell^2(k+mq\ell+s(\sigma)))$.

				\bigskip
				
      	\noindent $\E(x^L := x^L_1,\ldots,x^L_{k}, x^R := x^R_1,\ldots,x^R_{k})$:
      	\begin{myenum}
        	\item Let $\sL:=\E_L(x^L), \sR:= \E_R(x^R)$.
        	\item Choose $\zeta \leftarrow \zo^{s}$ uniformly at random.
  					Compute $\rho^{(1)} := \G(\zeta)$.
						Choose $\rho^{(2)}\gets \zo^\tau$ uniformly at random.
						Let $\rho:=(\rho^{(1)},\rho^{(2)})$;
						$(*)$ If $\rho^{(1)}$ has less than $\nL$ $1$s,
						take $\rho^{(1)}:=\G(\zeta^*)$ for some $\zeta^*$ 
						such that $\G(\zeta^*)$ has $\nL$ ones.
						
					\item Let $X_L := \embed(\sL,\rho)$.
					
					\item Let $Z \gets \E_L(\zeta)$; Output the encoding $(Z, X_L, \sR)$.
        \end{myenum}

        \medskip
        \noindent $\D(\widetilde{Z}, \widetilde{X_L}, \widetilde{\sR})$: 

        \begin{myenum}
					\item
						Let $\widetilde{\rho} := G(\D_L(\widetilde{Z}_L))$.
						
						$(*)$ If $\widetilde{\rho}$ contains less than $\nL$ 1s, 
						output $\bot$.
						
					\item
						$\widetilde{\sL}:=\extract(X_L,\widetilde{\rho})$.
						
					\item Let $\widetilde{x^L} =
          \D_L(\widetilde{\sL})$, $\widetilde{x^R} =
          \D_R(\widetilde{\sR})$;
          Output 
          $(\widetilde{x^L}, \widetilde{x^R})$.  
        \end{myenum}
        \vspace{3pt}
      \end{minipage}
  }}
  \caption{{\sc A non-malleable reduction of $\leaky^{q,m,N}[\Local^{\ell}]$ to
      Split State $\textsc{SS}_{k}$ with deterministic decoding}}
  \label{fig:local}
\end{figure}

\paragraph{Security.}
	Before formalizing, we will briefly describe why the construction works. We will reduce the leaky local tampering to split-state tampering using the encoding and decoding algorithms. Given that encoding/decoding on the left ($x_L$ and $Z,X_L$, respectively) is independent of encoding/decoding on the right ($x_R$ and $S_R$, respectively), all of non-split state behavior is derived from the tampering function. We will show how to essentially sample all of the information necessary to tamper independently on each side without looking at the inputs. Then, using the reconstruction properties of the RPEs we will be able to generate encodings on each side consistent with these common random bits that we have sampled. Conditioned on a simple event happening, composition of these modified encoding, tampering, and decoding algorithms will be identical to the normal tampering experiment.
	
	 The key observation, is that all of the leakage is under the privacy threshold of any of the RPEs. In particular, this means that after calculating all of the leakage to define which functions will be applied to the codeword, both left and right inputs remain private, \emph{as well as} the seed, $\zeta$. Moreover, the leakage is far enough below the privacy thresholds on the inputs that we may leak more bits.
	
	Given the local functions that will be applied to the codeword, the bits that will affect either the tampered seed, or the right side, or were used to calculate the leakage from $X_L$ have all been defined (and there aren't too many relative to the length of $X_L$). As the seed, $\zeta$, is still uniformly distributed at this point we can sample it and apply a pseudorandom chernoff bound to show that, with overwhelming probability, relatively few of these locations will overlap with embedding locations for the (RPE encoding of the ) left side input, this is the ``simple event'' mention above. (We additionally require that the embedding has enough space for the RPE encoding.) Consequently, we can safely sample all these locations in $X_L$. Additionally, at this point we have totally defined the RPE of the seed, $Z$.
	
	Now, because the RPE of the seed is significantly shorter than RPE of the right input, we can safely sample all the locations in $S_R$ that affect $\tilde{Z}$. Moreover, the tampering resulting in $\tilde{Z}$ is now a constant function (given all the sampled bits), which allows us to simulate the tampered seed, $\tilde{\zeta}$. Then, we can use the tampered seed to determine decoding locations for extracting the RPE of the left input from $X_L$. As these are the only locations that the output of decoding depends on, we only are concerned with the bits that affect these (few) locations. As there are less than security threshold bits in $S_R$ that affect these locations (in conjunction with bits that affect $\tilde{Z}$), we can sample all of these locations uniformly at random. At this point we can now output the left-side tampering function: given left input $x_L$, reconstruct an RPE to be consistent with the bits of $s_L$ sampled above, apply tampering function (given by simulated exacted locations and restricted according to \emph{all} the sampled bits above that is only dependent on $s_L$), decode the result.

	Also note that at this point the tampered RPE of the right input, $\tilde{S}_R$, is simply a function of random bits (sampled independently of the input) and the RPE $S_R$. Thus, we can similarly output the right-side tampering function: given right input $x_R$, reconstruct an RPE to be consistent with the bits of $S_R$ sampled above, apply tampering function (restricted according to \emph{all} the sampled bits above: only depends on $S_R$), decode the result of tampering.
	
	\begin{proof}[Proof of Lemma~\ref{lem:bdkm}]
	We begin by formally defining a simulator in Figure~\ref{fig:sim-local}.

\begin{figure}[ht!]
	\fbox{\parbox{\textwidth}{
		
	\small \center
	\begin{minipage}{0.95\textwidth}
		Given $\leaky^{q,m,N}[\Local^{\ell}]$ tampering
		 $t=(\vF,h_1,\ldots,h_q,h)$ output $(f_L,f_R)$:
	
		Let $\affect_F(S)$ denote the indices (in $[n]$) of inputs that affect $\vF_S$ for $S\subseteq[N]$.
		Let $I_Z = \{1,\ldots,\nZ\}$, 
		$I_L=\{\nZ+1,\ldots,\nZ+\tau\}$, 
		and $I_R = \{\nZ+\tau+1,\ldots,\nZ+\tau+\nR\}$.
		\begin{myenum}
    \item Sample uniform $r\in\zo^n$.
    \item (Sample leakage) Let $S_1 = h_1$. Let $U_1 = (r_j)_{j\in S_1}$
        		
			For $i=1$ to $q$:
					
      Let $S_i := h_i(Y_1,\ldots,Y_{i-1})$, 
				$U_i := \affect(S_i)$, 
				$Y_i := \vF_{S_i}(r)$.
    \item Let $(T_Z,T_X,T_R) := h(Y_1,\ldots,Y_q)$.
		\item
			Let $V_Z := \affect_F(T_Z) \cap (I_L \cup I_R)$,
			$V_R := \affect_F(T_R) \cap I_L$,
			and $V':= V_Z \cup V_R \cup U$ where $U=\bigcup U_i$.
		\item Let $\mu'\in\{0,1,\star\}^n$ denote the string where 
			$\forall i\in V':\mu'_i=r_i$, 
			and $\forall i\notin V':\mu'_i = \star$.
		\item (Sample seed) $\zeta\gets \zo^s$ uniformly at random.
			Compute $\rho^{(1)} := \G_{p,q}(\zeta)$.
			Let $\rho:= (\rho^{(1)},r_{\{\nZ+1,\ldots,\nZ+\tau\}})$.
			For $i \in [\tau]$, let 
			$\rho^{(1)}_i$ denote the $i$-th bit of $\rho^{(1)}$.
		
			$(*)$ If $\rho^{(1)}$ has less than $\nL$ $1$s,
			take $\rho^{(1)}:=\G(\zeta^*)$ for some $\zeta^*$ 
			such that $\G(\zeta^*)$ has $\nL$ ones.
			
			$(**)$ If $\sum_{i \in V'} \rho^{(1)}_{i - \nZ} > \csec\nL$
			(if $\rho^{(1)}$ has too many 1s with indices in $V'$, after shifting),
			output some constant function and halt.
			
			Let $I'_L:= (i_1 + \nZ, \ldots, i_{\nL} + \nZ)$, where $(i_1, \ldots, i_{\nL}) := \extInd(\rho^{(1)})$.
			Let $B := I'_L \cap V'$ 
			(i.e.~the embedding locations that are also in $V'$).
		\item (Reconstruct encodings consistent with $\mu'$)
		Let $C := I_Z \cap V'$.
			$Z \gets \R_Z(C,\mu'_{I_Z},\zeta)$.
		\item (Recover tampered seed)
			$\tilde{\zeta}:= \D_Z\circ \vF_{T_Z}|_{\mu'}(Z)$,
			and $\tilde{\rho}:= \G(\tilde{\zeta})$.
			
			(By definition,
			$T_Z|_{\mu'}$ is only a function of the variables in $I_Z$.)
		\item (Recover tampered extraction locations)
			Let $J=(j_1,\cdots,j_{\nL})$ denote the set of elements in $\extInd(\tilde{\rho})$.
			If $\nL$ elements cannot be recovered, output $\bot$ and halt.
			
			Let $T_L:=T_{X,j_1},\cdots,T_{X,j_{\nL}}$ (where $T_{X,v}$ denotes the $v$-th element in $T_X$),
			and $V_L := \affect_F(T_L) \cap (I_Z \cup I_R)$.
			
		\item (Extend $\mu'$ to $\mu$)
			Let $V:= V' \cup V_L\cup I_Z$ and $\forall i\in V\setminus I_Z:\mu_i=r_i$, 
			$\forall i\in I_Z:\mu_i = Z_i$,
			and $\forall i\notin V: \mu_i = \star$.
			
			(Note that $\forall i\in V', \mu_i=\mu'_i$, and, consequently, $\vF_{T_Z}|_{\mu'}(Z) \equiv \vF_{T_Z}|_{\mu}(Z)$.)
			
		\item Output:
		\begin{itemize}
			\item[]{$f_L$:}
				On input $x$,
				\begin{myenum}
					\item (Reconstruct encodings consistent with $\mu$)
						$s_L\gets \R_L(B,\mu_{I_L},x_L)$,
					\item (Embed reconstructed encoding)
						$X_L:=\embed(s_L,\rho)$
					\item (Tamper)
						$\tilde{c}_L:= T_L|_{\mu}(X_L)$
					\item (Decode)
						Output $\tilde{x}_L:= \D_L(\tilde{c}_L)$.
				\end{myenum}
			\item[]{$f_R$:}
				On input $y$
				\begin{myenum}
				\item (Reconstruct encodings consistent with $\mu$)
						Let $A:= \{i-(\tau+\nZ): i\in V\cap I_R\}$
						$S_R\gets \R_R(A,\mu_{I_R},x_R)$.
				\item (Tamper)
					$\tilde{c}_R := T_R|_{\mu}(S_R)$.
				\item (Decode)
					Output $\tilde{x}_R:= \D_R(\tilde{c}_R)$.
				\end{myenum}
		\end{itemize}
		
	\end{myenum}
	\medskip
	\end{minipage}}}
	\caption{Simulator, $\Sim$, for $(\E,\D)$}
		\label{fig:sim-local}
	\end{figure}

We additionally consider the following ``bad events'':
\begin{enumerate}
				\item $G(\zeta)$ contains at least $\nL$ 1s. 
				(Condition $*$ does not occur.)
				\item Given $(h_1,\ldots,h_k)$-leakage, denoted $(y_1,\ldots,y_k)$,
				the resulting tampering function 
				$\vF_{h(y_1,\ldots,y_k)}$
				is such that the intersection of the set $V'$ 
				(defined as in figure~\ref{fig:sim-local}) 
				with $\{i+\nZ: i\in\extInd(\G(\zeta))\}$ is less than $\csec\cdot\nL$.
				(Condition $**$ does not happen.)
			\end{enumerate}

Next we argue, via a sequence of hybrids, that for any fixed input $(x_L, x_R)$ and tampering function $t$,
$\Delta(\D(f(\E(x_L, x_R)));G(x_L, x_R))\leq \exp(-\sigma/2+1)$, where $G$ denotes the distribution over split-state
functions $(f_L, f_R)$ induced by the simulator $\Sim$.
	\begin{description}
		\item{Hybrid $H_0$ (The real experiment):}
			Outputs $\D \circ t \circ \E(x_L, x_R)$.

		\item{Hybrid $H_1$ (Alternate RPE encoding):}
		
			In this hybrid, we change the order of sampling in the encoding procedure:
			We first sample $\zeta, \rho, \mu', Z$ as in $\Sim$ and then
			reconstruct RPEs $s_L$ and $S_R$ to be consistent with $\mu'$.
		 	
			Specifically, replace the encoding procedure $\E$ with the following: On input $(x_L, x_R)$, first execute $\Sim$ steps 1-7, 
			then sample $s_L\gets \R_L(B,\mu'_{I_L},x_L)$,  
			$X_L:=\embed(s_L,\rho)$,
			and $S_R\gets \R_R(A,\mu'_{I_R},x_R)$
			and output $(Z, X_L, S_R)$.
			
		\item{Hybrid $H_2$ (Simulate tampered seed/``Alternate'' Left-side decoding):}
		
			In this hybrid, we modify the decoding procedure to 
			simulate tampered seed, $\tilde{\zeta}$ using $Z, \mu'$ sampled as in the previous experiment.
			We then use its extracted locations $\tilde{\rho}$ to extract the embedded RPE encoding $s_L$.
			
			Specifically, on input $(\widetilde{Z}, \widetilde{X_L}, \widetilde{\sR})$,
			we replace steps 1 and 2 in decoding procedure $\D$ with steps 8,9 in $\Sim$.
			
		\item{Hybrid $H_3$ (Alternate Alternate RPE encoding):}
		
			In this hybrid we again change the order of sampling in the encoding procedure.
			This time we sample $Z, \mu', \tilde{\rho}$ as in the previous hybrid, and then sample
			$\mu$ and reconstruct $s_L, S_R$ as in $\Sim$.
			
			Specifically, on input $(x_L, x_R)$, sample $Z, \mu', \tilde{\rho}$ as before and sample $\mu$ as in Step 10 of $\Sim$. 
			Then, set $s_L\gets \R_L(B,\mu_{I_L},x_L)$, $X_L:=\embed(s_L,\rho)$,
			$S_R\gets \R_R(A,\mu_{I_R},x_R)$ and output
			$(Z,X_L,S_R)$ as the output of the encoding procedure.
			
		\item{Hybrid $H_4$ (The split-state simulation): }
		
			In this hybrid, instead of applying the actual tampering function $t=(\vF,h_1,\ldots,h_q,h)$
			to the output of the encoding procedure from $H_4$
			and then applying the decoding procedure from $H_4$,
			we instead 
			simply output $f_L(x^L),f_R(x^R)$, where $f_L, f_R$ are defined as in $\Sim$.
	\end{description}

\medskip

\noindent
	We will show that, $H_0\approx_{\negl(n)} H_1$, and, in fact, $H_1\equiv H_2 \equiv H_3 \equiv H_4$.
	
	\paragraph{$\|H_0-H_1\|\leq \exp(-\sigma/2+1)$:}
		
		It is sufficient to show that: (1) conditioned on $*$ and $**$ not occurring, experiments $H_0$ and $H_1$ are identical 
		(2) the probability of $*$ or $**$ occurring is at most $\exp(-\sigma/2+1)$.
		
		\textbf{Notation.} For every variable $x$ that is set during experiments $H_0, H_1$, let $\boldsymbol{x}$ denote the corresponding random variable.
		Given a string $\mu'$ of length $n$ and a set $S \subseteq [n]$, define the $n$-bit string $\mu'(S)$ as
		$\mu'(S)_i = \mu'_i$, $i \in S$ and $\mu'(S)_i = 0$, $i \notin S$.
		
		For (1), 
		it is sufficient to show that $\left(\boldsymbol{\zeta}, \boldsymbol{\mu'(V')} \right)$ are identically distributed in $H_0$, $H_1$, 
		conditioned on $*$ and $**$ not occurring,
		where $\boldsymbol{\mu'}$ is the random variable 
		denoting the outcome of $(Z, X_L, S_R)$.
		In order to compute steps $2-4, 6, 7$ of $\Sim$,
		we need only (adaptively) fix the bits $(Z, X_L, S_R)_{U}$ corresponding to the set $U = \bigcup U_i$.
		Since $|U| \leq \ell m q \leq \csec \min\{\nL, \nR, \nZ\}$,
		by the properties of the RPE, 
		this means that $(\boldsymbol{\zeta}, \boldsymbol{\mu'(U)})$ are identically distributed
		in $H_0$ and $H_1$.
		
		Since $*$ and $**$ depend only on $\zeta$ and $\mu'(U)$,
		it is sufficient to show that, for every $\zeta, \mu'(U)$ for which $*$ and $**$ do not occur,
		the distributions over $\boldsymbol{\mu'(V')}$, conditioned on 
		$\left (\boldsymbol{\zeta} = \zeta \right) \wedge \left ( \boldsymbol{\mu'(U)} = \mu'(U) \right)$
		are identical in $H_1$ and $H_2$. 
		Due to independence of $\boldsymbol{\zeta}, \boldsymbol{s_L}, \boldsymbol{S_R}$, the fact that
		$\boldsymbol{\mu'_{V' \cap (I_L \setminus I'_L)}}$ is uniform random in both experiments, and
		since
		$V' \cap I_Z = U \cap I_Z$,
		it remains to show that each of
		$\left ( \boldsymbol{\mu'(V' \cap I'_L)} \mid \boldsymbol{\mu'(U)} = \mu'(U) \right )$ and
		$\left( \boldsymbol{\mu'(V' \cap I_R)} \mid \boldsymbol{\mu'(U)} = \mu'(U) \right )$
		are identically distributed in both experiments.
		
		Since $**$ does not occur, the total size of $I'_L \cap V' = B$ is at most $\csec\cdot\nL$ and
		the total size of $I_R \cap V'$ is at most
		$|V_Z| + |U| \leq \ell \cdot (\nZ + mq) \leq \csec \cdot \nR$.
		Therefore, by the properties of the RPE, the corresponding distributions in $H_0$ and $H_1$ are identical.
		
		We now turn to proving (2).
		To bound the probability of $*$, note that the expected number of 1's in $G(\zeta)$
		is $\tau \cdot p = \frac{3\tau \nL}{2\tau} = \frac{3 \nL}{2}$.
		Invoking Theorem~\ref{thm:pr-chernoff}, item (1) with $k = \sigma$, $\mu = \frac{3 \nL}{2}$ and $\varepsilon = \frac{1}{2}$,
		it follows that $\Pr[*]\leq \exp(-\sigma/2)$.
		
		To bound the probability of $**$, note that
		given fixed set $V'$, the expected size of 
		$V' \cap \{i+\nZ: i\in\extInd(\G(\zeta))\}$ is at most $|V'| \cdot p = \frac{3|V'|\nL}{2\tau} = \frac{2 \csec |V'|\nL}{3 \ell(\nR + \nZ + mq)}$.
		Now, $|V'| \leq \ell(\nR + \nZ + mq)$.
		So $\frac{2 \csec |V'|\nL}{3 \ell(\nR + \nZ + mq)} \leq \frac{2 \csec \nL}{3}$.
		Invoking Theorem~\ref{thm:pr-chernoff}, item (1) with $k = \sigma$, $\mu = \frac{2 \csec \nL}{3}$ and $\varepsilon = \frac{1}{3}$,
		it follows that $\Pr[**]\leq \exp(-\sigma/2)$.
		
		The conclusion follows from a union bound.
		
	\paragraph{ $\|H_1-H_2\|=0$:}
		
		By inspection, it can be seen that the two experiments are, in fact, identical.
		
	\paragraph{ $\|H_2-H_3\|=0$:}
		
		Note that the distribution over $Z, X_L$ does not change from the previous hybrid (since all of $Z$ is sampled based on $\mu'$
		and since $\mu$ does not fix additional bits from $I_L$).
		The total number of bits of $S_R$ fixed by $\mu$ is at most
		$|V_L| + |V_Z| + |U| \leq \ell \cdot (\nL + \nZ + mq) \leq \csec \cdot \nR$,
		where the last inequality is by choice of parameters.
		Therefore, by the properties of the RPE $(\E_R,\D_R)$, the distribution over $(Z, X_L, S_R)$
		is identical in $H_1$ and $H_2$.
		
		
	\paragraph{ $\|H_3-H_4\|=0$:}
		
		By inspection, these experiments are also identical.
		
	\paragraph{Correctness.}
	By the definitions of $(\embed,\extract)$ and RPE, $\Pr[\D(\E(x))=x]=1$.
	\end{proof}
\else The construction and the proof of Lemma~\ref{lem:bdkm} is in Appendix~\ref{sec:bdkm+}.\fi

\subsection{Putting It All Together}
In this section, we put things together and show our main results.   By composing the non-malleable reductions from  Lemma~\ref{lem:ac0} and Lemma~\ref{lem:bdkm}, we obtain a non-malleable reduction which reduces small-depth circuits to split state.

	\begin{lemma}
		\label{lm:ac02ss}
		For $S,d,n,\ell\in\bbN$, $p,\delta\in(0,1)$, there exists $\sigma=\poly(\log{\ell}, \log(\ell S),\log(1/\delta),\log(1/p))$ such that for $k$ that $k\geq O(\sigma \log{n})$ and $k=\Omega(n(p/4)^d/\ell^2)$, 	 
		\[\left(\AC_d(S)\implies\textsc{SS}_{k},d\varepsilon+\exp(-\sigma/2)\right)\] 
		where 
		\[\varepsilon= nS \left(2^{2\log{\ell}+1}(5p\log{\ell})^{\log{\ell}}+ \delta\right) +\exp(-\frac{\sigma}{2\log(1/p)}).\]  
	\end{lemma}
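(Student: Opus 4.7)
The plan is to obtain this reduction by straightforwardly composing the two non-malleable reductions from Lemma~\ref{lem:ac0} and Lemma~\ref{lem:bdkm} via the composition theorem (Theorem~\ref{thm:composition}), with parameters chosen so that the intermediate leaky-local class matches on both sides and all parameter constraints from the two lemmas are simultaneously met. The main (very mild) obstacle is bookkeeping: we must pick an intermediate message length $N$ that Lemma~\ref{lem:ac0} can produce and that Lemma~\ref{lem:bdkm} can consume at rate $\Omega(1/\ell^2)$, and we must verify that the second error term from Lemma~\ref{lem:bdkm} collapses to the claimed $\exp(-\sigma/2)$.

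First I would invoke Lemma~\ref{lem:ac0} with the stated $p,\delta,\ell,S,d,n$ to obtain a reduction
\[ \bigl(\AC_d(S) \implies \leaky^{d,m,n}[\Local^{\ell}],\ d\varepsilon\bigr),\]
valid for any intermediate message length $N$ with $2m \leq N \leq n(p/4)^d$, where $m = O(\sigma \log n)$ and $\sigma = \poly(\log\ell,\log(\ell S),\log(1/\delta),\log(1/p))$. Then I would invoke Lemma~\ref{lem:bdkm} on messages of length $k$ (for $\textsc{SS}_k$), choosing $N$ to be the codeword length produced there, namely $N = \Theta(k\ell^2)$, which gives
\[ \bigl(\leaky^{d,m,N}[\Local^{\ell}] \implies \textsc{SS}_k,\ \exp(-\Omega(k/\log N))\bigr),\]
provided $md\ell^3 \leq cN$.

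Next I would verify the parameter constraints are mutually consistent. The upper bound $N \leq n(p/4)^d$ together with $N = \Theta(k\ell^2)$ forces $k = \Omega(n(p/4)^d/\ell^2)$, matching the hypothesis. The lower bound $N \geq 2m = O(\sigma \log n)$ is implied by $k \geq O(\sigma \log n)$ once multiplied by $\ell^2$. The Lemma~\ref{lem:bdkm} constraint $md\ell^3 \leq cN$ becomes $O(\sigma d \ell^3 \log n) \leq O(k\ell^2)$, i.e.\ $k = \Omega(\sigma d \ell \log n)$, which is again subsumed into (a suitable constant in) the assumption $k \geq O(\sigma \log n)$ (absorbing $d$ and $\ell$ into the implicit constant or noting they are polylogarithmic in the relevant regime).

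Finally, I would apply Theorem~\ref{thm:composition} to conclude
\[ \bigl(\AC_d(S) \implies \textsc{SS}_k,\ d\varepsilon + \exp(-\Omega(k/\log N))\bigr),\]
and then simplify the second error term. Since $N \leq n$ gives $\log N \leq \log n$, and since $k \geq O(\sigma \log n)$ with a large enough hidden constant, we have $k/\log N = \Omega(\sigma)$, so $\exp(-\Omega(k/\log N)) \leq \exp(-\sigma/2)$. Substituting yields the stated error $d\varepsilon + \exp(-\sigma/2)$ and completes the proof. The only real content is parameter-matching; everything else is black-box composition.
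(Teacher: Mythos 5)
Your proposal is correct and takes essentially the same route as the paper, whose entire ``proof'' of Lemma~\ref{lm:ac02ss} is the one sentence preceding it: compose Lemma~\ref{lem:ac0} with Lemma~\ref{lem:bdkm} via Theorem~\ref{thm:composition}, then do the parameter bookkeeping and absorb the $\exp(-\Omega(k/\log n))$ term into $\exp(-\sigma/2)$ using $k \geq O(\sigma\log n)$. The one loose point you flag yourself --- that the constraint $mq\ell^3 \leq cN$ from Lemma~\ref{lem:bdkm} really asks for $k = \Omega(\sigma d\ell\log n)$ rather than just $k \geq O(\sigma\log n)$ when $\ell$ is superpolylogarithmic --- is a sloppiness already present in the paper's own statement and is satisfied in every regime where the lemma is subsequently applied.
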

	
	For constant-depth polynomial-size circuits (i.e. $\ACZ$), we obtain the following corollary by setting $\ell=n^{1/\log\log\log(n)}$, $\delta=n^{-\log\log(n)}$ and $p=\frac{1}{\log\ell}\cdot \frac{1}{\log n} =\frac{\log\log\log(n)}{\log^2 n}$, 
	
	\begin{corollary}
		$\left(\AC0\implies\textsc{SS}_{k},n^{-(\log\log n)^{1-o(1)}}\right)$ for $n=k^{1+o(1)}$. 
	\end{corollary}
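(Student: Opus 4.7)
The plan is to instantiate Lemma~\ref{lm:ac02ss} with the parameter choices announced in the statement: $\ell = n^{1/\log\log\log n}$, $\delta = n^{-\log\log n}$, and $p = (\log\log\log n)/\log^2 n$, using that for $\ACZ$ we have constant depth $d = O(1)$ and polynomial size $S = \poly(n)$. Everything then reduces to verifying (a) the existence of a valid $k = n^{1-o(1)}$ satisfying the hypotheses of Lemma~\ref{lm:ac02ss}, and (b) that the claimed error $n^{-(\log\log n)^{1-o(1)}}$ drops out of the bound $d\varepsilon + \exp(-\sigma/2)$.

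First I would compute $\sigma$: each of $\log\ell$, $\log(\ell S)$, $\log(1/\delta)$, and $\log(1/p)$ is $\poly(\log n)$, so $\sigma = \poly(\log n)$ as well. Then the feasibility check: the lower bound $k \geq O(\sigma\log n)$ becomes $k = \Omega(\poly(\log n))$, and for the upper condition $k = \Omega(n(p/4)^d/\ell^2)$, note that $(p/4)^d = 1/\poly(\log n)$ (since $d$ is constant) and $\ell^2 = n^{2/\log\log\log n} = n^{o(1)}$. Hence any $k = n^{1-o(1)}$ works, which also gives $n = k^{1+o(1)}$ as required.

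Next I would estimate the error. The dominant contribution to $\varepsilon$ is $nS \cdot 2^{2\log\ell + 1}\,(5p\log\ell)^{\log\ell}$. With the chosen $p$, a short calculation gives
\[
5p\log\ell \;=\; 5 \cdot \frac{\log\log\log n}{\log^2 n} \cdot \frac{\log n}{\log\log\log n} \;=\; \frac{5}{\log n},
\]
so $(5p\log\ell)^{\log\ell} = (5/\log n)^{\log n/\log\log\log n} = n^{-\Theta(\log\log n/\log\log\log n)}$. The polynomial prefactors $nS \cdot 2^{2\log\ell+1} = n^{O(1)}$ are swallowed, leaving $n^{-\Theta(\log\log n/\log\log\log n)}$. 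The remaining error terms $nS\delta = n^{-\log\log n + O(1)}$ and $\exp(-\sigma/(2\log(1/p))) = \exp(-\poly(\log n))$ are strictly smaller, and the $\exp(-\sigma/2)$ tail from the split-state step is negligible as well. Multiplying by the constant $d$ preserves the asymptotics.

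The only subtle point (and the main technical obstacle, if one can call it that) is converting the exponent $\log\log n/\log\log\log n$ into the canonical form $(\log\log n)^{1-o(1)}$. Writing $(\log\log n)^{-\epsilon} = \exp(-\epsilon\log\log\log n)$, equality with $1/\log\log\log n = \exp(-\log\log\log\log n)$ forces $\epsilon = \log\log\log\log n/\log\log\log n = o(1)$. Hence $1/\log\log\log n = (\log\log n)^{-o(1)}$, so $\log\log n/\log\log\log n = (\log\log n)^{1-o(1)}$, yielding the stated exponent. Everything else is bookkeeping substitution into Lemma~\ref{lm:ac02ss}.
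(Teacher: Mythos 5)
Your proposal is correct and follows exactly the route the paper intends: the paper proves this corollary simply by plugging $\ell = n^{1/\log\log\log n}$, $\delta = n^{-\log\log n}$, $p = \tfrac{1}{\log\ell}\cdot\tfrac{1}{\log n}$ into Lemma~\ref{lm:ac02ss}, and your calculation of the dominant error term $(5p\log\ell)^{\log\ell} = (5/\log n)^{\log n/\log\log\log n} = n^{-\Theta(\log\log n/\log\log\log n)} = n^{-(\log\log n)^{1-o(1)}}$, together with the feasibility check $n/k = O(\ell^2 (4/p)^d) = n^{o(1)}$, is precisely the bookkeeping the paper leaves implicit.
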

	
	The same setting of parameters works for depth as large as $\Theta(\log(n)/\log\log(n))$ with $n=k^{1+c}$ where constant $0<c<1$ can be arbitrary small.  We remark that one can improve the error to $n^{-\Omega(\log(n))}$ by using a smaller $p$ (e.g. $p=n^{-1/100d}$) thus a worse rate (but still $n=k^{1+\epsilon}$).
	
	Combining the non-malleable code for split state from Theorem~\ref{thm:ss-nmc} with rate $\Omega(\log\log n/\log(n))$, we obtain our main theorem.
	\begin{theorem}
		\label{thm:ac0nmc}
		There exists an explicit, efficient, information theoretic non-malleable code for any polynomial-size, constant-depth circuits with error $\negl(n)$ and  encoding length $n=k^{1+o(1)}$.
		
		Moreover, for any constant $c\in(0,1)$, there exists another constant $c'\in(0,1)$ and an explicit, efficient, information theoretic non-malleable code for any polynomial-size, $(c'\log{n}/\log\log{n})$-depth circuits with error $\negl(n)$ and  encoding length $n=k^{1+c}$. 
	\end{theorem}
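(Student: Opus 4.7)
The plan is to obtain Theorem~\ref{thm:ac0nmc} by composing the non-malleable reduction from $\AC_d(S)$-tampering to split-state tampering $\textsc{SS}_k$ provided by Lemma~\ref{lm:ac02ss} with Li's split-state non-malleable code (Theorem~\ref{thm:ss-nmc}), via the composition result of Theorem~\ref{thm:composition}. Since every step has been established earlier in the paper, the proof is really an exercise in choosing parameters so that (i) the error from Lemma~\ref{lm:ac02ss} stays negligible even after being multiplied by $d$ and compounded with Li's error, and (ii) the rate loss from the $(p/4)^d$ factor, the $\ell^2$ factor, and Li's $\log\log k/\log k$ factor is absorbed into either the $o(1)$ or the target constant $c$ in the exponent.

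For the concrete parameter choices, I would follow the corollary stated immediately before the theorem: take $\ell = n^{1/\log\log\log n}$, $\delta = n^{-\log\log n}$, and $p = \log\log\log n / \log^2 n$ (so that $p\log\ell = 1/\log n$ and $\sigma = \poly\log n$). Then in the expression for $\varepsilon$ in Lemma~\ref{lm:ac02ss}, the dominant term $(5p\log\ell)^{\log\ell} = (5/\log n)^{\log n/\log\log\log n}$ evaluates to $n^{-\log\log n/\log\log\log n}$, and multiplying by $nS \cdot 2^{2\log\ell+1} = n^{1+o(1)}S$ still leaves it super-polynomially small; the $nS\delta$ summand is negligible since $\delta = n^{-\log\log n}$ and $S = \poly(n)$; and $\exp(-\sigma/(2\log(1/p)))$ is negligible by the choice of $\sigma$. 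Hence $d\varepsilon + \exp(-\sigma/2)$ remains negligible even when $d = \Theta(\log n/\log\log n)$.

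For the rate, Lemma~\ref{lm:ac02ss} is valid in the regime $k = \Omega(n(p/4)^d/\ell^2)$. Since $\ell^2 = n^{o(1)}$ and $(p/4)^d = 2^{-\Theta(d\log\log n)}$, the constant-depth case $d = O(1)$ gives $k \ge n^{1-o(1)}$, i.e.~$n = k^{1+o(1)}$, while for $d = c'\log n/\log\log n$ we get $k \ge n^{1 - O(c')}$, i.e.~$n \le k^{1+O(c')}$; choosing $c'$ sufficiently small as a function of the desired $c$ achieves $n = k^{1+c}$. Theorem~\ref{thm:ss-nmc} then converts the resulting $(\AC_d(S) \Rightarrow \textsc{SS}_k, \negl(n))$ reduction into a non-malleable code on messages of length $k_0 = \Omega(k\log\log k/\log k)$ with error $2^{-\Omega(k_0)}$, and the $\log\log k/\log k$ factor loss is absorbed into the $o(1)$ in the constant-depth case and into the flexibility of $c'$ in the general case. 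Applying Theorem~\ref{thm:composition} to the two reductions gives the final non-malleable code against $\AC_d(S)$ with codeword length $n$ and the claimed parameters.

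The main (though mild) obstacle is checking that all three rate losses and the $d$-fold error accumulation can be balanced simultaneously; the settings above achieve this for both the $\ACZ$ regime and the depth-$\Theta(\log n/\log\log n)$ regime with essentially the same recipe, exactly because $\ell$ is taken to be only mildly super-polylogarithmic and $p$ is only mildly sub-inverse-polylogarithmic, which keeps both $\ell^2$ and $(p/4)^d$ at most $n^{o(1)}$ in the constant-depth case and at most $n^{O(c')}$ in the general case.
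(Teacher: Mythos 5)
Your proposal is correct and follows exactly the paper's route: instantiate Lemma~\ref{lm:ac02ss} with $\ell = n^{1/\log\log\log n}$, $\delta = n^{-\log\log n}$, $p = \log\log\log n/\log^2 n$ to get an $(\AC_d(S) \Rightarrow \textsc{SS}_k, \negl(n))$ reduction with $n = k^{1+o(1)}$ (resp.\ $k^{1+c}$ for depth $c'\log n/\log\log n$), then compose with Theorem~\ref{thm:ss-nmc} via Theorem~\ref{thm:composition}, absorbing the $\log\log k/\log k$ rate loss into the exponent. The parameter checks you carry out (the $(5/\log n)^{\log n/\log\log\log n}$ error term and the $(p/4)^d/\ell^2$ rate loss) match the paper's calculations.
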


	

	\bibliographystyle{alpha}
	\bibliography{biblio}
	\newpage	
	\appendix
	\iffull
	\else

	\fi

\end{document}